\newtheorem{definition}{Definition}[section]
\newtheorem{lemma}{Lemma}[section]
\begin{document}

\title{A2-DIDM: Privacy-preserving Accumulator-enabled Auditing for Decentralized Identity of DNN Model}

\author{Tianxiu~Xie,
Keke~Gai,~\IEEEmembership{Senior Member,~IEEE,}
Jing~Yu,~\IEEEmembership{Senior Member,~IEEE,}
Liehuang~Zhu,~\IEEEmembership{Senior Member,~IEEE}
\thanks{T. Xie, K. Gai and L. Zhu are with the School of Cyberspace Science and Technology, Beijing Institute of Technology, Beijing, China, 100081, and K. Gai is also with the School of AI, Beijing Institute of Technology, Beijing 100081, China, and Zhongguancun Academy, Haidian, Beijing, China.
(Emails: \{3120215672, gaikeke, liehuangz\}@bit.edu.cn.).}
\thanks{J. Yu is with the Key Laboratory of Ethnic Language Intelligent Analysis and Security Governance of MOE, Minzu University of China, and is also with the School of Information Engineering, Minzu University of China (e-mail: jing.yu@muc.edu.cn).}
\thanks{This work is partially supported by the National Key Research and Development Program of China (Grant No. 2021YFB2701300), the National Natural Science Foundation of China (Grant No.s U24B20146, 62372044), Beijing Municipal Science and Technology Commission Project (Z241100009124008), and Beijing Nova Program (Grant No. 20250484921).
}
\thanks{Keke Gai is the corresponding author (gaikeke@bit.edu.cn), J. Yu is a co-corresponding author (jing.yu@muc.edu.cn).}
}

\markboth{Journal of \LaTeX\ Class Files,~Vol.~XX, No.~X, Month~2024}%
{Shell \MakeLowercase{\textit{et al.}}: A Sample Article Using IEEEtran.cls for IEEE Journals}


\maketitle

\begin{abstract}
Recent booming development of Generative Artificial Intelligence (GenAI) has facilitated model commercialization to reinforce the model performance, including licensing or trading Deep Neural Network (DNN) models. 
However, DNN model trading may violate the benefit of the model owner due to unauthorized replications or misuse of the model.
Model identity auditing is a challenging issue in protecting DNN model ownership, and verifying the integrity and ownership of models is one of the critical obstacles. 
In this paper, we focus on the above issue and propose an \underline{A}ccumulator-enabled \underline{A}uditing for \underline{D}ecentralized \underline{Id}entity of DNN \underline{M}odel (A2-DIDM) that utilizes blockchain and zero-knowledge techniques to protect data and function privacy while ensuring the lightweight on-chain ownership verification. 
The proposed model presents a scheme of identity records via configuring model weight checkpoints with zero-knowledge proofs, which incorporates predicates to capture incremental state changes in model weight checkpoints.  
Our scheme ensures both computational integrity and programmability in DNN training process so that the uniqueness of the weight checkpoint sequence in a DNN model is preserved. 
A2-DIDM also addresses privacy protections in decentralized identity.
We systematically analyze the security and robustness of our proposed model and further evaluate the effectiveness and usability of auditing DNN model identities. 
The code is available at https://github.com/xtx123456/A2-DIDM.git.

\end{abstract}

\begin{IEEEkeywords}
Decentralized identity, Deep neural network, Blockchain, Identity audit, Intellectual property protection.
\end{IEEEkeywords}

\section{Introduction}\label{intro}

\IEEEPARstart{A}{s} {\em Generative Artificial Intelligence} (GenAI) continues to drive the growing demand for data and model trading, such as licensing DNN models on third-party commercial platforms. 
{\em Model-as-a-Service} (MaaS) has emerged as a representative commercialization paradigm and has been applied to natural language processing \cite{min2023recent}, image recognition \cite{zhang2023vitaev2}, and intelligent decision-making \cite{yang2019unremarkable}. 
However, once a model is traded or its parameters are illegally exposed, it may be redistributed or misused by unauthorized parties, including competitors and adversaries. 
Our research addresses the model identity issue and consider model identity auditing a promising direction for protecting model {\em Intellectual Property} (IP) and enabling secure model commercialization \cite{peng2023intellectual, chen2023deepjudge, li2019prove, lao2022identification}. 
Nevertheless, existing DNN IP protection schemes are still inadequate for model identity auditing. 
An ideal solution should verify ownership, guarantee training integrity, and validate training parameters to prevent unauthorized data theft or redistribution.


Existing schemes can be typically categorized into three types\cite{dong2023rai2}.
The first type identifies DNN models through watermarking or fingerprinting. Watermarking embeds secret marks into model weights, activation functions, or architectures, but may degrade model performance \cite{zhang2020model, szyller2021dawn, zhao2023protecting}. Fingerprinting generates unique identifiers from model features or metadata, yet such identifiers can be invalidated by model distillation or extraction attacks, and even forged by adversaries to claim ownership \cite{peng2022fingerprinting, yu2021artificial}.
Second, training dataset inference determines whether a model is stolen by leveraging the model owner’s private knowledge of the training dataset and decision boundaries \cite{maini2021dataset, dziedzic2022difficulty, dziedzic2022dataset}. 
However, such schemes may expose sensitive data checkpoints, which significantly limits their practical adoption.
Finally, computation-based model ownership verification schemes replay the training history to validate the computational integrity of the claimed model owner throughout training \cite{jia2021proof,liu2023provenance,choi2023tools}.
These methods remain vulnerable to “adversarial samples” attacks and still face the risk of proof forgery \cite{zhang2022adversarial, fang2023proof}.
Therefore, we argue that the key challenge in model ownership verification is to design an efficient and lightweight model identity auditing scheme that does not disclose the model owner’s private knowledge during training.
Moreover, due to the modifications made to the model structure during inference auditing of DNN models\cite{abbaszadeh2024zero, garg2023experimenting}, we prioritize the integrity auditing of the model training process over the inference process.

By leveraging blockchain, a decentralized identity for DNN models can serve as a trusted identity auditing mechanism, providing a new perspective for model ownership verification and commercialization \cite{2019EverSSDI,10229502}. 
An effective \underline{D}ecentralized \underline{Id}entity of DNN \underline{M}odel (DIDM) should satisfy two properties: (i) \textbf{training integrity}, namely that model owners have genuinely expended computational resources for training; and (ii) \textbf{execution correctness}, namely that the training tasks are carried out correctly.
To address the above issues, inspired by decentralized verifiable computation \cite{xiong2023verizexe, bowe2020zexe}, we propose a novel \underline{A}ccumulator-enabled \underline{A}uditing for \underline{D}ecentralized \underline{Id}entity of DNN \underline{M}odel (A2-DIDM). 
A2-DIDM integrates lightweight proving techniques with zero-knowledge mechanisms to enable auditable and verifiable model identity while preserving the privacy of training information. 
To support lightweight operations, A2-DIDM introduces the {\em Identity Record} (IR) as the minimal unit of decentralized identity management. 
Each IR captures the incremental update between adjacent model weight checkpoints during DNN training,
An ordered sequence of IRs forms the unique training history of the final model, thereby ensuring training integrity. 
By recording IRs (or their commitments) on the blockchain, A2-DIDM enables effective auditing of the optimization process performed by the model owner to obtain the converged model. 
Moreover, the policies and logic governing checkpoint updates are formalized as predicates that regulate IR creation and execution. 
Satisfiability of these predicates indicates that the IRs originate from genuine training over private data, thereby ensuring execution correctness. 
Thus, A2-DIDM allows the model owner to train off-chain and subsequently validate the DIDM on-chain using accumulator-based proofs.

Furthermore, to achieve constant on-chain verification cost without disclosing checkpoints or predicate details, A2-DIDM adopts an accumulator-based zkSNARK proving system. 
Rather than revealing IRs directly, the model owner generates a zkSNARK proof asserting that all predicates in the DIDM instance are satisfied, enabling zero-knowledge verification of both training integrity and execution correctness. 
Concretely, the model owner first generates internal proofs ($\pi_{in}$) for IRs to prove that the predicates associated with the corresponding checkpoints are satisfied.
Then it computes an external proof ($\pi_{out}$) to verify the correctness of these internal proofs while ensuring unlinkability between predicates and IRs. 
To reduce the cost of the external proof, the accumulator defers the expensive pairing operations of the external zkSNARK to on-chain computation, while preserving lightweight verification for both the accumulator and the external proof. 
Neither $\pi_{in}$ nor $\pi_{out}$ reveals any information about the DNN training data or predicate/function implementations. 
Finally, the model owner submits the IR commitments and the DIDM zkSNARK proof ($\pi_{out}$) to the blockchain verifier, which verifies $\pi_{out}$ on-chain at constant cost and records the corresponding IR commitments upon acceptance. 
The blockchain timestamps of IRs further mitigate replay attacks, such as preemptive ownership claims by adversaries.

Our contributions in this paper are summarized as follows:
\begin{enumerate}
    \item We propose a novel accumulator-based model ownership verification scheme that combines verifiable computation and decentralized identity. A2-DIDM models the process of weight checkpoint changes in a DNN model by attaching specific predicates to IRs. It realizes auditable and programmable DNN training processes for the model owner, while ensuring lightweight proof operations.
    \item The accumulator guarantees the computational integrity without exposing any dataset or weight checkpoint. It also hides all function states within the predicates, including those triggered by each IR. Specifically, A2-DIDM achieves data privacy and function privacy for model owners without altering the internal structure of the model or the training optimization algorithm.
    \item In A2-DIDM, model owners perform DNN training computations off-chain and validate their identity on-chain. On-chain verifiers only need to verify the short IR validity proof instead of re-executing the DNN training computations. Regardless of the complexity of off-chain computations, the on-chain verification time cost remains constant.
\end{enumerate}


The organization of this work follows the order below.
Related work is given in Section \ref{sec:rw}.
Section \ref{sec:pre} provides the preliminaries.
Sections \ref{sec:mod} and \ref{sec:ana} present the design and security analysis of the proposed model, respectively. 
Sections \ref{sec:exp} provide experiment evaluations with findings and we draw our conclusions in Section \ref{sec:con}.
\section{Related Work} \label{sec:rw}

\textbf{Deep Learning IP Protection.} 
Existing studies show that once a model is released under MaaS, adversaries may replicate or steal it through remote API access, threatening the legitimate owner’s interests \cite{zheng2022dnn, ijcai2025p831}. 
This has motivated extensive research on IP protection and ownership verification. 
Model watermarking is a representative approach. SSLGuard \cite{cong2022sslguard} targets pre-trained encoders in self-supervised learning, while Wen {\em et al.} \cite{wen2023function} enhance watermark robustness by generating watermark triggers from original training samples. 
In distributed machine learning, watermarks have also been embedded into global models for ownership verification \cite{liu2025tracemop,chen2023fedright,gai2025mfl}.
For example, MFL-Owner injects client-distinguishable watermarks for multimodal federated learning without degrading task performance \cite{gai2025mfl}. 
However, model watermarking generally faces a trade-off between practicality and robustness, and often requires modifications to the training algorithm or model structure. 
Moreover, model stealing attacks can remove or invalidate watermarks, thereby infringing model copyrights \cite{jia2021entangled, sun2023denet}. 
To address these limitations, Guan {\em et al.} \cite{guan2022you} exploit sample correlation differences to detect model stealing, and Li {\em et al.} \cite{li2022defending} identify victim models through exogenous feature embedding and meta-classifier training. 
More recently, the notion of model identity has been discussed in RAI2 \cite{dong2023rai2} and SoK \cite{chandrasekaran2021sok}. 
Nevertheless, the above model identity auditing schemes all compromise the model owner’s privacy. 
By contrast, our approach preserves both data privacy and function privacy via two-step incremental verifiable computation, while ensuring computational integrity and execution correctness.

\textbf{Verifiable Computing in Cryptography.}
Verifiable computation enables the model owner to outsource a computing task to an untrusted third-party executor while obtaining a proof that the returned result is correct \cite{zhang2023survey, li2023martfl,du2024towards}.
For example, 
Niu {\em et al.} \cite{9247447} developed a verifiable and privacy-preserving machine learning method that maintained function privacy through oblivious evaluation and batch result verification.
Moreover, verifiable computation could also be used for verifiable federated learning.
Guo {\em et al.} \cite{9285303} propose \textit{VeriFL}, a verifiable aggregation protocol with dimension-independent communication cost and bounded computational overhead.
Additionally, some research focuses on privacy-preserving DML training \cite{liu2021leia,liu2021towards,10946247,garg2023experimenting,abbaszadeh2024zero}. 
Leia \cite{liu2021leia} turns neural network inference into a two–non-colluding-edge secure protocol by rewriting the model as BNN and using lightweight secret-sharing-friendly operators.
The zkPoT scheme\cite{garg2023experimenting} improves verifiable computation schemes using MPC-in-the-head and realize zero-knowledge proof for model training. 
Nevertheless, these zero-knowledge DML verification schemes do not embed the original training algorithm losslessly. They approximate non-arithmetic operations for arithmetization, which alters training and typically proves only one epoch. 
By contrast, A2-DIDM commits to the training history without changing the DNN model or its training algorithms.

\section{Preliminaries}\label{sec:pre}


\textbf{Notation.}
In this paper, $\mathbf{W}$ $=$ $[W_{0},W_{1},W_{2},...,W_{P}]$
represents the weight checkpoint sequences of DNN, where $W_{i}(i \in [0,P])$ represents the weight parameters at the $i$-th epoch.
Additionally, for an $T$-layer DNN model, the checkpoint at the $i$-th epoch is denoted as $W_{i}$ $=$
[$W^{(0)}_{i}, W^{(1)}_{i}, ..., W^{(T)}_{i}$], where the entry $W^{(t)}_{i}(t \in [0,T])$ represents the $t$-th layer weight parameters of DNN.
Furthermore, $\mathbb{G}$ denotes a finite field and the set of integers.

\textbf{Collision-resistant Hash Function.}
Collision-resistance and one-wayness are the key properties of hash functions.
A typical hash function $\mathsf{CRH}$ involves two algorithms as follows: 
\begin{itemize}
    \item $\mathtt{pp_{h}}$ $\gets$ $\mathsf{CRH.Gen}(1^{\lambda})$: Inputs the security parameter ($\lambda$) and outputs the public parameters ($\mathtt{pp_{h}}$) for collision-resistant hash function.
    \item $\mathtt{h}(m)$ $\gets$ $\mathsf{CRH.Val}(\mathtt{pp_H},m)$: On input public parameters ($\mathtt{pp_{H}}$) and message ($m$), output a short hash value $\mathtt{h}(m)$.
\end{itemize}

\textbf{Commitment Scheme.}
A commitment scheme is a cryptographic primitive that enables a sender (\textbf{prover}) to commit to a message while keeping it hidden, and later open the commitment to reveal the message.
A commitment scheme $\mathsf{CS}$ is typically defined by three Probabilistic Polynomial-Time (PPT) algorithms,
$\mathsf{CS}=(\mathsf{Gen},\mathsf{Commit},\mathsf{Open})$:
\begin{itemize}
  \item $\mathtt{pp_{C}} \leftarrow \mathsf{CS.Gen}(1^{\lambda})$: on input the security parameter $\lambda$, output the public parameters $\mathtt{pp_{C}}$.
  \item $\mathtt{com} \leftarrow \mathsf{CS.Commit}(\mathtt{pp_{C}}, m; r)$: on input a message $m$ and randomness $r$, output a commitment $\mathtt{com}$.
  \item $b \leftarrow \mathsf{CS.Open}(\mathtt{pp_{C}}, \mathtt{com}, m; r)$: output $b\in\{0,1\}$ indicating whether $\mathtt{com}$ is a valid commitment to $m$ under randomness $r$.
\end{itemize}

A secure commitment scheme satisfies:
\begin{itemize}
  \item \textbf{Binding.} It is computationally infeasible for any PPT adversary to find
  $(m,r)\neq(m',r')$ such that
  $\mathsf{Commit}(\mathtt{pp_{C}}, m; r)=\mathsf{Commit}(\mathtt{pp_{C}}, m'; r')$.
  \item \textbf{Hiding.} For any PPT distinguisher and any equal-length messages $m_0,m_1$,
  the distributions $\mathsf{Commit}(\mathtt{pp_{C}}, m_0; r)$ and
  $\mathsf{Commit}(\mathtt{pp_{C}}, m_1; r)$ are computationally indistinguishable,
  where $r \overset{\$}{\leftarrow} R_{\mathtt{pp_{C}}}$.
\end{itemize}

\textbf{Non-interactive Zero-Knowledge (NIZK) Arguments.}
Non-interactive zero-knowledge (NIZK) arguments allow a \textbf{prover} to convince a \textbf{verifier} that an NP statement is true using a single proof, without interaction and without revealing any information beyond the validity of the statement.
Let $\mathbf{R}\subseteq \mathcal{S}\times\mathcal{W}$ be a binary relation, where $S\in\mathcal{S}$ is an instance and $J\in\mathcal{W}$ is a witness. A NIZK argument system is defined by four PPT algorithms,
$\mathsf{NIZK}=(\mathsf{Gen},\mathsf{KeyGen},\mathsf{Prove},\mathsf{Verify})$:
\begin{itemize}
  \item $\mathtt{pp_{ZK}} \leftarrow \mathsf{NIZK.Gen}(1^{\lambda},\mathbf{R})$:
  on input the security parameter $\lambda$ and relation $\mathbf{R}$, output the public parameters $\mathtt{pp_{ZK}}$.

  \item $(\mathtt{pk},\mathtt{vk}) \leftarrow \mathsf{NIZK.KeyGen}(\mathtt{pp_{ZK}})$:
  on input $\mathtt{pp_{ZK}}$, output a proving key $\mathtt{pk}$ and a verification key $\mathtt{vk}$.

  \item $\pi \leftarrow \mathsf{NIZK.Prove}(\mathtt{pp_{ZK}},\mathtt{pk}, S, J)$:
  on input an instance-witness pair $(S,J)\in \mathbf{R}$, output a proof $\pi$.

  \item $b \leftarrow \mathsf{NIZK.Verify}(\mathtt{pp_{ZK}},\mathtt{vk}, S, \pi)$:
  on input the instance $S$, verification key $\mathtt{vk}$, and proof $\pi$, output $b\in\{0,1\}$ indicating acceptance if $b=1$ and rejection otherwise.
\end{itemize}

A NIZK argument system typically satisfies the following properties:
\begin{itemize}
  \item \textbf{Completeness.} For any $(S,J)\in\mathbf{R}$, an honest \textbf{prover} can generate a proof that an honest \textbf{verifier} accepts.

  \item \textbf{Knowledge soundness.} No efficient PPT adversary can convince the \textbf{verifier} to accept a false statement except with negligible probability.

  \item \textbf{Zero knowledge.} The proof reveals no information beyond the validity of the statement.
\end{itemize}

\textbf{Principal Component Analysis (PCA).}
PCA is a linear dimensionality reduction technique that projects features onto an orthogonal basis ordered by variance. In the context of parameter independence, uniform feature contributions and the absence of dominant correlations disperse variance across multiple components, preventing the first principal component from dominating.

\section{Definitions and Models}\label{sec:mod}

A2-DIDM verifies both the integrity and correctness of the final DIDM. Unlike methods that require modifications to the training architecture or optimization procedure, it only requires the model owner ($\mathcal{P}$) to generate IRs for successive checkpoints ($\mathbf{W}$) and maintain the corresponding zkSNARK proofs. During training, $\mathcal{P}$ keeps the private dataset ($\mathbf{D}\mathcal{P}$) and checkpoint sequence ($\mathbf{W}$) off-chain without revealing any parameters. Once all predicate checks for the incremental updates in an IR are satisfied, $\mathcal{P}$ generates the corresponding external proof ($\pi_{out}$). When claiming ownership of the converged model ($f_{W_{P}}$), the on-chain verifier ($\mathcal{V}$) only checks the commitments and proofs ($\pi_{out}$) in the submitted IRs, thereby confirming that the IRs are from genuine DNN training rather than a forged history, while preserving the privacy of datasets and model parameters.

\subsection{Threat Model for A2-DIDM}



A2-DIDM involves three parties: the legitimate model owner ($\mathcal{P}$), an on-chain verifier ($\mathcal{V}$), and an adaptive PPT adversary ($\mathcal{A}$). 
$\mathcal{P}$ trains the DNN model and generates IRs together with the corresponding zkSNARK proofs; 
$\mathcal{V}$ verifies the IRs and proofs using the public verification procedures and outputs accept/reject; 
$\mathcal{A}$ attempts to pass verification without conducting the genuine training in order to illicitly claim model ownership.
In addition, we assume that $\mathcal{V}$ only performs the protocol-defined verification operations (i.e., checking the commitments and zkSNARK proofs), without retraining the full DNN or reproducing the training history. 
Meanwhile, $\mathcal{P}$ preserves the privacy of training dataset ($\mathbf{D}_\mathcal{P}$) and the checkpoint sequence ($\mathbf{W}=[{W_0,\ldots,W_P}]$). 
Only the IR commitments and the zkSNARK proofs are published publicly on the blockchain. 
Additionally, we assume standard security properties for the underlying cryptographic primitives, specifically commitment binding and hiding, as well as NIZK completeness, knowledge soundness, and zero-knowledge. 
Finally, we assume that the blockchain is a persistent, immutable distributed ledger resilient to infrastructure threats such as 51\% attacks or eclipse attacks.

\textit{Adversary's Objective:} 
The primary objective of $\mathcal{A}$ is to deceive $\mathcal{V}$ into accepting a fraudulent claim of having executed the prescribed training process on the dataset.
Specifically, $\mathcal{A}$ fabricates a set of IRs and proofs leading to the target model ($f_{W_P}$), thereby simulating a plausible training history and zkSNARK proofs without actual execution.

\textit{Adversary's Knowledge:}
$\mathcal{A}$ possesses white-box access to the converged DNN model ($f_{W_P}$), granting full knowledge of its architecture and final weight parameters ($W_P$). 
Additionally, the adversary $\mathcal{A}$ may have access to an auxiliary dataset ($\mathbf{D}_\mathcal{A}$) whose distribution is similar to that of the private training dataset ($\mathbf{D}_\mathcal{P}$). 
We note that a subset ($\mathbf{D}_\mathcal{A}^{l}$) of samples in $\mathbf{D}_\mathcal{A}$ are correctly labeled.
However, the labeled subset ($\mathbf{D}_\mathcal{A}^{l}$) is significantly smaller in size (i.e., $|\mathbf{D}_\mathcal{A}^{l}|\ll|\mathbf{D}_\mathcal{P}|$), and $\mathcal{A}$ is permitted to perform arbitrary computations upon it. 
While $\mathcal{A}$ is fully aware of the protocol specification, IR format, and all public parameters, it does not possess the original training dataset ($\mathbf{D}_\mathcal{P}$), the genuine checkpoints (${W_i}$), or the secret randomness and witnesses used by $\mathcal{P}$ to generate valid proofs.

\textit{Adversary's Capability:}
We consider $\mathcal{A}$'s ability to effectively deceive $\mathcal{V}$ from the following four aspects:
\begin{enumerate}
    \item[\textbf{M1.}] 
    Without running genuine training, $\mathcal{A}$ can fabricate intermediate weight checkpoints by gradually interpolating between a randomly initialized parameter vector and the target final parameters ($W_P$). 
    This process generates a smooth, ostensibly continuous sequence terminating at $W_P$ that bypasses actual optimizer dynamics, aiming to satisfy consistency requirements during IR verification.

    \item[\textbf{M2.}] 
    Knowing $W_P$, instead of using random initialization, $\mathcal{A}$ may set the initial parameters to $\widetilde{W}_0$ that are close to $W_P$ under a chosen metric (e.g., small $\lVert \widetilde{W}_0 - W_P\rVert$). 
    This reduces the apparent number of training steps needed to reach the final model and makes it easier to fabricate a smooth-looking sequence of weight checkpoints.
    
    \item[\textbf{M3.}] 
    Given  white-box access to final DNN model ($f_{W_P}$), $\mathcal{A}$ can treat $f_{W_P}$ as a teacher model and perform knowledge distillation or imitation learning on an auxiliary dataset ($\mathbf{D}_\mathcal{A}$) to construct a forged parameter sequence ($\widetilde{\mathbf{W}}=[{\widetilde{W}_0,\ldots,\widetilde{W}_P}]$) such that $\widetilde{W}_P = W_P$. 
    The purpose is to produce weight checkpoints that appear to reflect gradual convergence, even though the updates are not required to be genuine optimizer steps on the private training dataset ($\mathbf{D}_\mathcal{P}$).

    \item[\textbf{M4.}]
    $\mathcal{A}$ may attempt protocol-level attacks, including forging NIZK proofs, forging zkSNARK proofs, or producing verifiable IRs without satisfying the required predicates. 
    In the security analysis, we rule out these attacks by relying on the security of the underlying primitives, such as commitment binding and NIZK zero-knowledge and knowledge soundness.

\end{enumerate}

\subsection{Distributed Identity Construction of DNN Model}

In our proposed A2-DIDM, the authorized model owners with access to blockchain can execute DNN training computation off-chain. 
Moreover, a series of IR commitments are recorded in on-chain transactions only if all associated predicates are satisfied.

\begin{figure}[t]
    \centering
    \includegraphics[width=0.9\columnwidth]{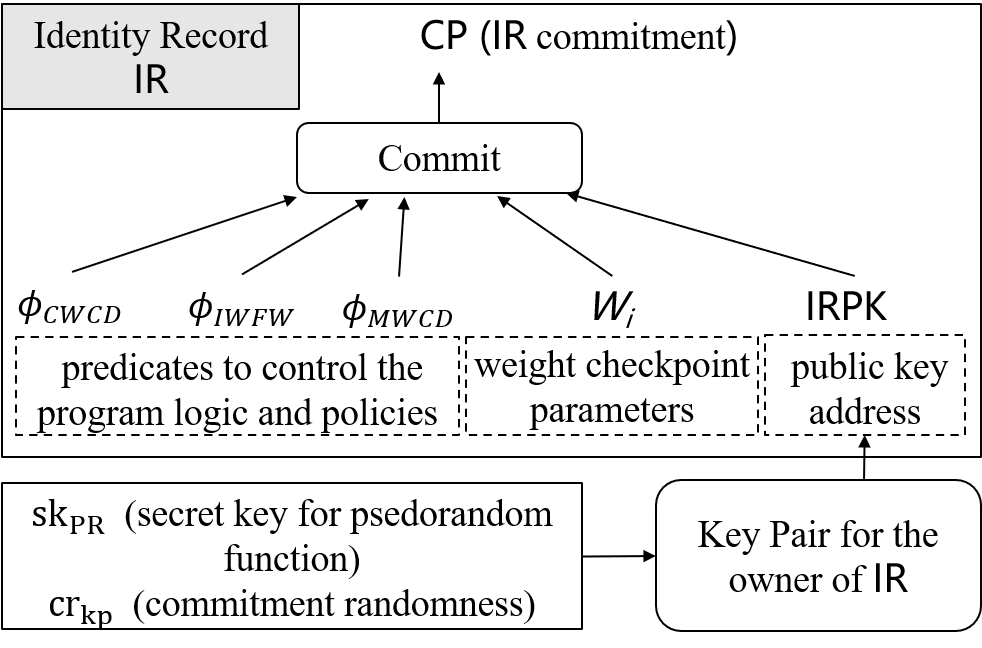}
    \caption{The construction of Identity Record for DIDM.}
    \label{fig1}
\end{figure}

\subsubsection{Identity Record and Transaction Design}

The definition of IR is shown as Definition \ref{def1}.
We note that any sequence of weight checkpoint states is indexed by the training step $i \in [0,P]$. 
All predicates have access to auxiliary information ($\mathtt{AUX}$), which is not publicly available on the blockchain. 

\begin{definition}
    (Identity Record). 
    Given the private dataset $\mathbf{D}_\mathcal{P}$, for a model owner $\mathcal{P}$, a valid Identity Record (IR) is defined as $\mathtt{IR}$ $=$ $(\mathtt{CP}$, $\mathtt{IRPK}$, $W_i$, $\Phi_{\mathtt{CWCD}}$, $\Phi_{\mathtt{IWFW}}$, $\Phi_{\mathtt{MWCD}}$, $\mathtt{AUX}$$)$. Specifically, $\mathtt{IR}$ has the following attributes: (a) a commitment $\mathtt{CP}$ for a weight checkpoint $W_{i}$ from the checkpoint sequence $\mathbf{W}$, combines all the remaining attributes of $\mathtt{IR}$ without information disclosure; 
    (b) $\mathtt{IRPK}$ denotes the public key address which represents the owner of the $\mathtt{IR}$; 
    (c) $W_i$ denotes the weight checkpoint parameters at the current training epoch; 
    (d) $\Phi_{\mathtt{CWCD}}$ denotes the predicate for the continuity of weight checkpoint distribution; 
    (e) $\Phi_{\mathtt{IWFW}}$ denotes the predicate for the distance between initial weight and final weight; 
    (f) $\Phi_{\mathtt{MWCD}}$ denotes the predicate for the monotonicity of weight checkpoint distribution; 
    (g) the cryptography auxiliary information, denoted by $\mathtt{AUX}$, includes accumulator parameters and commitment randomness, etc. 
\end{definition}\label{def1}


The life-cycle of DIDM starts with the model owner that executes DNN training computing off-chain by generating new IRs.
The construction of IR and transaction are directly shown in Fig.s \ref{fig1} and \ref{fig2}, respectively.
Fig. \ref{fig2} illustrates the construction of valid transactions on the blockchain, which includes a ledger digest, new IR commitment ($\mathtt{CP}$), and zkSNARK proof ($\pi_{out}$). 
For a valid transaction, the commitments of all IRs undergo multiple hash operations, and the resulting Merkle tree root serves as the ledger digest that is stored. 
The commitments involve the weight information in the IRs, ensuring data privacy during IR execution. 
In the NIZK arguments, we formalize the predicate satisfiability of DID audit as a binary relation ($\mathbf{R}$) that captures both rightful ownership of the IR (i.e., a valid opening of the IR commitment) and correctness of predicate proof. 
Specifically, the process of proving relation ($\mathbf{R}$) is modeled as a two-step incremental verifiable computation, consisting of an internal zkSNARK proof ($\pi_{in}$) and an external zkSNARK proof ($\pi_{out}$).
The internal proof ($\pi_{in}$) attests to the valid IR commitment opening and the satisfaction of all predicates ($\Phi_{\mathtt{CWCD}}$, $\Phi_{\mathtt{IWFW}}$, $\Phi_{\mathtt{MWCD}}$) under the weight checkpoint ($\mathbf{W}$) of IR.
To achieve function privacy, the internal proof ($\pi_{in}$) is not directly published on the blockchain. 
Instead, an external zkSNARK proof ($\pi_{out}$) is generated to ensure the correctness of the internal proof ($\pi_{in}$) regarding the predicates.
Accordingly, the external proof ($\pi_{out}$) asserts that all predicates associated with the IRs are satisfied, while preserving function privacy throughout the zero-knowledge proof.
Finally, an on-chain zkSNARK \textbf{verifier} validates the external proof ($\pi_{out}$). 
This process confirms the validity of the computation without revealing the actual predicates involved in the IR.

\begin{figure}[t]
    \centering
    \includegraphics[width=0.9\columnwidth]{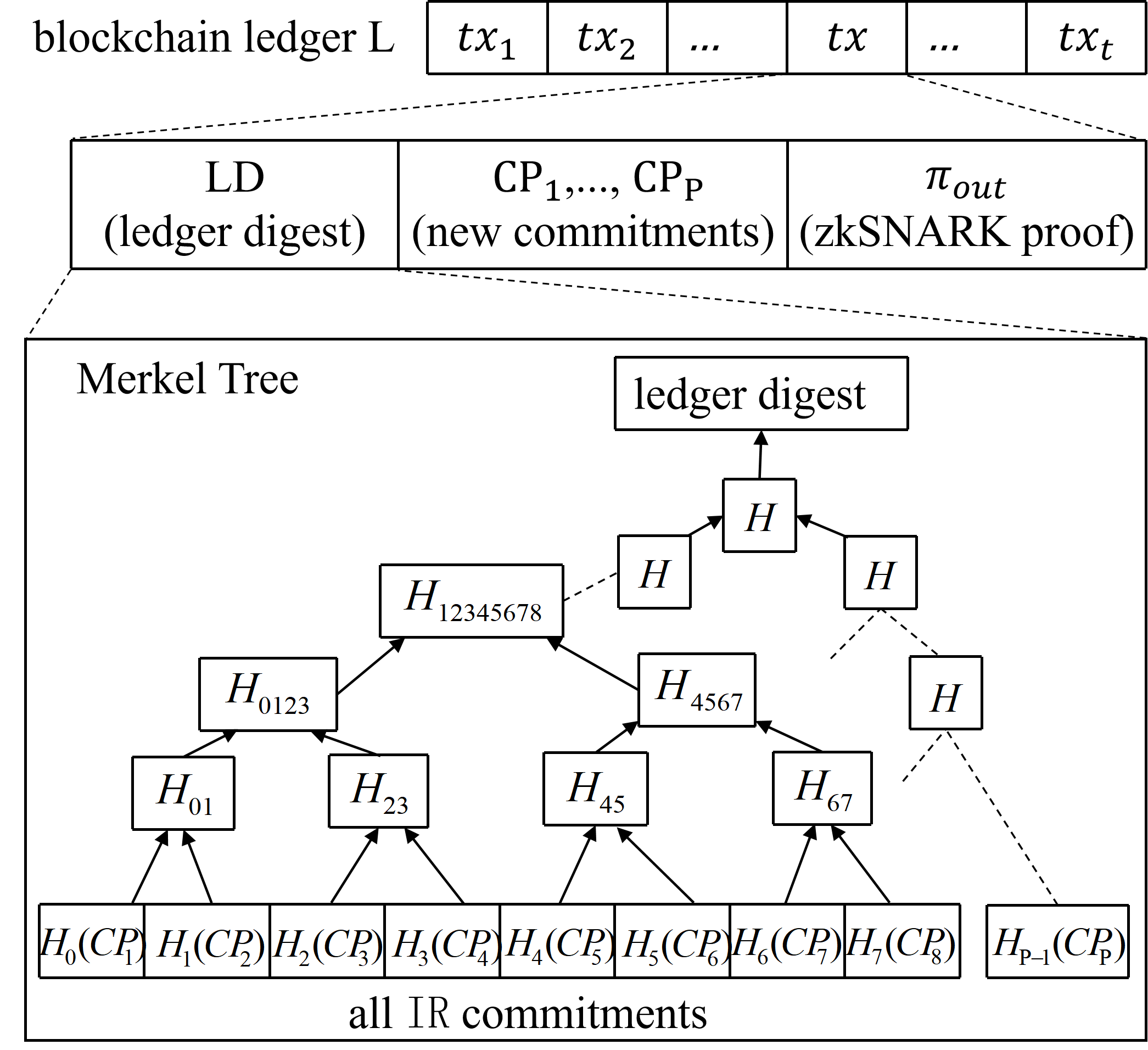}
    \caption{The construction of valid transactions on blockchain.}
    \label{fig2}
\end{figure}


\subsubsection{Predicate Design}

The predicates in IRs govern the incremental updates between consecutive weight checkpoints.
To ensure correct DIDM auditing, we require verifiable uniqueness of the checkpoint sequence ($\mathbf{W}$) for the final DNN model ($f_{W_{P}}$). 
Specifically, with an auxiliary dataset ($\mathbf{D}_\mathcal{A}$), $\mathcal{A}$ is unable to generate an alternative IR sequence which is still accepted by $\mathcal{V}$. 
Based on the threat model and the two effectiveness properties of DIDM, A2-DIDM should satisfy the following predicates to guarantee IR-sequence uniqueness.


\textbf{$\Phi_{\mathtt{CWCD}}$ (Continuity of Weight Checkpoint Distance).} 
Since weight parameters are updated incrementally, intermediate checkpoints evolve smoothly during training. By the convergence property of SGD, the optimization trajectory gradually contracts toward the final checkpoint $W_{P}$. We define the normalized distance between checkpoints by Eq. (\ref{eq:dl}):
\begin{equation}\label{eq:dl}
\mathsf{DL}(W_i,W_j) = \frac{\sqrt{\sum^{T}{t=0} \Vert W_i^{(t)}- W_j^{(t)} \Vert ^{2}{F}}}{Total_{\mathbf{W}}}.
\end{equation}
Then, since a randomly initialized checkpoint $W_{\mathsf{ran}}$ is generally farther from $W_{P}$ than the true initial checkpoint $W_{0}$, predicate $\Phi_{\mathtt{CWCD}}$ is defined as
\begin{equation}
\mathsf{DL}(W_{\mathsf{ran}},W_{P}) \gg \mathsf{DL}(W_{0},W_{P}).
\label{eq:1}
\end{equation}
Specifically, we sample a set of independent random initial weight checkpoints from the same initialization scheme as $W_0$ , and calculate the distance from these random initial checkpoints to $W_P$ respectively. Let $\mathrm{DisMean}$ and $\mathrm{DisStd}$ denote the mean and standard deviation of this set of distances, we satisfy $\Phi_{\mathtt{CWCD}}$ if
$
\mathsf{DL}(W_{0},W_{P}) \le \mathrm{DisMean} - \epsilon \cdot \mathrm{DisStd},
$
where $\epsilon$ is a safety margin.
Moreover, $\mathsf{DL}(W_{\mathsf{ran}}, W_{P})$ computes the overall distance across all corresponding weights in both DNN models.
Since weights in the same layers are identically distributed, $\mathsf{DL}(W_{\mathsf{ran}}, W_{P})$ is followed a normal distribution based on the central limit theorem.
Its standard deviation ($\mathrm{DisStd}$) scales inversely with the square root of the total weight count ($Total_{\mathbf{W}}$). 
Consequently, we set $\epsilon=5$ to achieve a 99.9999\% confidence level in our implementation. 
This indicates that for any random initialization model ($W_{\mathsf{ran}}$), the probability of obtaining $\mathsf{DL}(W_{\mathsf{ran}},W_{P}) \le \mathrm{DisMean} - 5 \times \mathrm{DisStd}$ is approximately $10^{-7}$.

\textbf{$\Phi_{\mathtt{IWFW}}$ (Small Distance from Initial Weight to Final Weight).} 
lthough an adversary may manipulate the forged initial checkpoint $W_0'$ to be close to the converged checkpoint $W_P$, random initialization can still be identified through the distributional property and mutual independence of the initial weights. 
Inspired by PoT \cite{liu2023provenance}, we verify whether the weights in each model layer are drawn from a two-component GMM distribution. 
For the $t$-th layer of the initial checkpoint $W_0^{(t)}$, let $\mathsf{WD}(W_0^{(t)})$ and $\mathsf{GD}(W_0^{(t)})$ denote its weight distribution and the GMM distribution, respectively. 
In addition, $\Phi_{\mathtt{IWFW}}$ requires independence between any two initial weight parameters. Accordingly, $\Phi_{\mathtt{IWFW}}$ is quantified as follows:
\begin{equation}\label{eq:2}
   \begin{cases}
       \Phi_{\mathtt{IWFW}}(1):\mathsf{EMD} \left ( \mathsf{WD}(W_{0}^{(t)}), \mathsf{GD}(W_{0}^{(t)}) \right) \approx 0 \\
       \Phi_{\mathtt{IWFW}}(2): \mathsf{COV} \left ( W_{0}^{(t)},W_{0}^{(j)} \right) = 0
   \end{cases}
\end{equation}
where $t,j \in [0,T]$, the function $\mathsf{EMD}(\cdot, \cdot)$ represents the Earth Mover’s Distance between two distributions and function $\mathsf{COV}(\cdot, \cdot)$ denotes the covariance of two weight parameters.
We note that $\Phi_{\mathtt{IWFW}}$ comprises two checks: (i) $\Phi_{\mathtt{IWFW}}(1)$ checks whether the initial checkpoint conforms to the prescribed GMM-based random initialization, whereas (ii) $\Phi_{\mathtt{IWFW}}(2)$ evaluates whether the initialization across layers (or parameters) exhibits statistical independence, which is quantified in our implementation via a PCA  statistic. Consequently, $\Phi_{\mathtt{IWFW}}$ is deemed valid only when $\Phi_{\mathtt{IWFW}}(1)\wedge \Phi_{\mathtt{IWFW}}(2)$ holds.

\textbf{$\Phi_{\mathtt{MWCD}}$ (Monotonicity of Weight Checkpoint Distribution).}
Assuming a small learning rate, the variation between consecutive weight checkpoints $W_{i-1}$ and $W_{i}$ should remain bounded. 
We quantify this using the projection distance ($\mathsf{DP}$) between the weight distribution ($\mathsf{WD}$) of the $t$-th layer weights ($W^{(t)}, t \in [0,T]$) as follows:
\begin{equation}
    \mathsf{DP} \left (\mathsf{WD}(W^{(t)}_{i-1}),\mathsf{WD}(W^{(t)}_{i})\right) < \delta , \forall i \in [1,P]
    \label{eq:ts1}
\end{equation}
Specifically, for a linear layer with weights $W^{(t)}$ and bias $\mathsf{b}^{(t)}$, utilizing Euclidean distance instantiates Eq. (\ref{eq:ts1}) as:
\begin{align}
    \sqrt{(\mathsf{b}_{i}^{(t)}- \mathsf{b}_{i-1}^{(t)})^{2}+(||W^{(t)}_{i}||^{2}_{2}-||W^{(t)}_{i-1}||^{2}_{2})^{2}} < \delta
    \label{eq:ts11}
\end{align}
$\Phi_{\mathtt{MWCD}}$ satisfies if the adjacent checkpoint pair holds Eq.~(\ref{eq:ts11}), i.e., any adjacent checkpoint pair is $\delta$-similar.

\subsubsection{Distributed Identity for DNN Model}

Since IR involves the intermediate weight checkpoint of the DNN model, DIDM can be represented by a series of IRs. 
The algorithms of DIDM is shown as follows:

\begin{itemize}
    \item $\mathtt{pp}$ $\gets$ $\mathsf{DIDM.Gen}(1^{\lambda})$: On input the security parameter ($\lambda$), output the public parameters ($\mathtt{pp}$), which include the CRH public parameters ($\mathtt{pp_H}$), commitment public parameters ($\mathtt{pp_C}$) and the zkSNARK public parameters ($\mathtt{pp}_{\mathtt{ZK}}$, $\mathtt{pp}_{\mathtt{in}}$).
    The parameters ($\mathtt{pp}$) are accessible to all functions.

    \item $(\mathtt{IRPK}, \mathtt{IRSK})$ $\gets$ $\mathsf{DIDM.AddrGen}(1^{\lambda})$:
    On input the public parameters ($\mathtt{pp}$), output the model owner's public key address ($\mathtt{IRPK}$) and private key address ($\mathtt{IRSK}$).
    The key address pair $(\mathtt{IRPK}, \mathtt{IRSK})$ is used to bind the IR to the model owner's on-chain identity.

    \item $(\mathtt{pk},\mathtt{vk}) \leftarrow \mathsf{DIDM.KeyGen}(\mathtt{pp})$:
    On input the public parameters ($\mathtt{pp}$), output a proving key ($\mathtt{pk}$) and a verification key ($\mathtt{vk}$).
    We note that the key pair $(\mathtt{pk}, \mathtt{vk})$ involves the key pair $(\mathtt{pk}_{\mathbf{R}}, \mathtt{vk}_{\mathbf{R}})$ for the relation ($\mathbf{R}$) and the key pair $(\mathtt{pk}_{in}, \mathtt{vk}_{in})$ for the internal zkSNARK proof ($\pi_{in}$).

    \item ($[\mathtt{IR}_{i}]_{i \in [0,P]}$, $[\mathtt{CP}_{i}]_{i \in [0,P]}$) $\gets$ $\mathsf{DIDM.IRGen}(\mathtt{pp}, \mathtt{IRPK},\mathbf{W},$ $ (\Phi_{\mathtt{CWCD}},\Phi_{\mathtt{IWFW}},\Phi_{\mathtt{MWCD}}))$: 
    On input the public key address ($\mathtt{IRPK}$), the weight checkpoint sequence ($\mathbf{W}$), and three predicates, it outputs a series of IRs ($[\mathtt{IR}]_{i \in [0,P]}$) and the corresponding commitments ($[\mathtt{CP}_i]_{i \in [0,P]}$).



    \item $\pi_{in}$ $\gets$ $\mathsf{DIDM.InnerZKProve}$($\mathtt{pp},\mathtt{pk}, S_{inp}, J_{in}$): On input the proving key ($\mathtt{pk}$), the witness ($J_{in}$) and the instance ($S_{inp}$), the algorithm outputs an internal zkSNARK proof ($\pi_{in}$).
    The proof ($\pi_{in}$) of internal NIZK relation ($\mathbf{R}_{in}$) asserts the valid opening of $[\mathtt{CP}_i]_{i \in [0,P]}$ and satisfication of predicates over the weight checkpoints ($\mathbf{W}$).


    \item ($\pi_{out}$, $\mathtt{PCP}$) $\gets$ $\mathsf{DIDM.OuterZKProve}$($\mathtt{pp},\mathtt{pk}, \mathtt{vk},S_{inp}, J_{out}, $ $\pi_{in}$): 
    On input the key pair ($\mathtt{pk}, \mathtt{vk}$), the instance ($S_{inp}$), the witness information ($J_{out}$), and the internal proof ($\pi_{in}$), it outputs an external proof ($\pi_{out}$) and a predicate commitment ($\mathtt{PCP}$).
    The proof ($\pi_{out}$) of NIZK relation ($\mathbf{R}$) asserts the predicate satisfiability with data privacy and function privacy.


    \item $b \leftarrow \mathsf{DIDM.Verify}(\mathtt{pp},\mathtt{vk},S_{outp}, \pi_{out})$:
    On input the verification key ($\mathtt{vk}$), the instance ($S_{outp}$), and the external proof ($\pi_{out}$), it outputs $b \in \{0,1\}$ indicating the verification result for $[\mathtt{IR}_i]_{i \in [0,P]}$.


\end{itemize}
Specifically, the first four algorithms ($\mathsf{DIDM.Gen}$, $\mathsf{DIDM.AddrGen}$, $\mathsf{DIDM.KeyGen}$, and $\mathsf{DIDM.IRGen}$) are used for the DIDM generation, whereas the last three algorithms ($\mathsf{DIDM.InnerZKProve}$, $\mathsf{DIDM.OuterZKProve}$, and $\mathsf{DIDM.Verify}$) constitute the DIDM auditing.

\begin{algorithm}[!t]
	\caption{$\mathsf{DIDM.Gen}$ for Parameter Initialization}
	\label{alg:1}
	\begin{algorithmic}[1]
		\REQUIRE{public parameter $\mathtt{pp}$}
  
        \ENSURE{security parameter $\lambda$, NIZK binary relation $\mathbf{R}$}




        \STATE $\mathbf{R}$ is a NIZK relation asserting predicate satisfiability with data and function privacy.

        \STATE Let $\mathbf{R}_{in}$ be the internal NIZK relation in $\mathbf{R}$ asserting: 
         $\forall i \in [0,P]$, $\mathtt{CP}_i$ opens to $(\mathtt{IRPK}$, $W_{i}$,  $\Phi_{\mathtt{CWCD}}$, $\Phi_{\mathtt{IWFW}}$, $\Phi_{\mathtt{MWCD}}$) $\land$ $\Phi_{\mathtt{CWCD}}(\mathbf{W}) = 1 $ $\land $ $\Phi_{\mathtt{IWFW}}(\mathbf{W}) = 1 $ $\land $ $\Phi_{\mathtt{MWCD}}(\mathbf{W}) = 1 $.

        \STATE{$\mathtt{pp_{H}}$ $\gets$ $\mathsf{CRH.Gen}(1^{\lambda})$: Output the public parameters $\mathtt{pp_{H}}$ for collision-resistant hash function.}
        \STATE{$\mathtt{pp_{C}}$ $\gets$ $\mathsf{CS.Gen}(1^{\lambda})$: Output the public parameters $\mathtt{pp_{C}}$ for commitment scheme.}
        


        \STATE{$\mathtt{pp_{ZK}}$ $\gets$ $\mathsf{NIZK.Gen}(1^{\lambda}, \mathbf{R})$: Output the public parameters $\mathtt{pp_{ZK}}$ for relation $\mathbf{R}$.}

        \STATE{$\mathtt{pp_{in}}$ $\gets$ $\mathsf{NIZK.Gen}(1^{\lambda}, \mathbf{R}_{in})$: Output the public parameters $\mathtt{pp_{in}}$ for internal relation $\mathbf{R}_{in}$.}
        
        \STATE{ return $\mathtt{pp}:= (\mathtt{pp_{H}},\mathtt{pp_{C}},\mathtt{pp_{ZK}}, \mathtt{pp_{in}})$.}
    
	\end{algorithmic}

\end{algorithm}

\begin{algorithm}[!t]
	\caption{$\mathsf{DIDM.AddrGen}$ for Address Key Pair Generation of Model Owner}
	\label{alg:2}
	\begin{algorithmic}[1]
		\REQUIRE{address key pair ($\mathtt{IRPK}$, $\mathtt{IRSK}$)}
  
        \ENSURE{public parameter $\mathtt{pp}$  }
        
        \STATE{Sample secret key $sk_{PR}$ for pseudorandom function.}
        \STATE{Sample randomness $cr_{kp}$ for commitment scheme.}

        \STATE{$\mathtt{IRSK}$ $:=(sk_{PR}$, $cr_{kp}$): Construct private key address.}
        
        \STATE{$\mathtt{IRPK}$ $\gets$ $\mathsf{CS.Commit}(\mathtt{pp_C}$, $sk_{PR}$, $cr_{kp}$): Output public key address ($\mathtt{IRPK}$).}

        \STATE{ return address key pair ($\mathtt{IRPK}$, $\mathtt{IRSK}$) of model owner and randomness $cr_{kp}$.}

	\end{algorithmic}

\end{algorithm}

The details of DIDM generation are shown in Alg.~\ref{alg:1} -- Alg.~\ref{alg:4}.
Specifically, Alg.~\ref{alg:1} instantiates the public parameters ($\mathtt{pp}$) for the hash and commitment primitives and generates the corresponding NIZK public parameters for relations ($\mathbf{R}$ and $\mathbf{R}_{in}$).
In particular, A2-DIDM defines the internal NIZK relation ($\mathbf{R}_{in}$) inside $\mathbf{R}$ to explicitly capture the internal consistency constraints (i.e., valid opening of $\mathtt{CP}$ and satisfaction of $\Phi_{\mathtt{CWCD}},\Phi_{\mathtt{IWFW}},\Phi_{\mathtt{MWCD}}$).
Finally, Alg.~\ref{alg:1} outputs $\mathtt{pp} :=(\mathtt{pp_H},\mathtt{pp_C},\mathtt{pp_{ZK}},\mathtt{pp_{in}})$.
Given parameters ($\mathtt{pp}$), Alg.~\ref{alg:2} derives the model owner’s address key pair $(\mathtt{IRPK}, \mathtt{IRSK})$ by committing to the auxiliary information ($sk_{PR}$ and $cr_{kp}$), while Alg.~\ref{alg:3} generates the NIZK proving and verification keys for $\mathbf{R}$ and $\mathbf{R}_{in}$ in a modular fashion. 
Finally, Alg.~\ref{alg:4} shows details for IR generation of a new DNN model.
Specifically, for each checkpoint ($W_i$), it commits to $(\mathtt{IRPK}, W_i, \Phi_{\mathtt{CWCD}}, \Phi_{\mathtt{IWFW}}, \Phi_{\mathtt{MWCD}})$ to obtain IR commitment ($\mathtt{CP}_i$), and assembles $\mathtt{CP}_i$ together with the associated metadata and auxiliary information into a record ($\mathtt{IR}_i$), yielding the record sequence $[\mathtt{IR}_i]_{i\in[0,P]}$ for subsequent DIDM auditing.

\begin{algorithm}[!t]
	\caption{$\mathsf{DIDM.KeyGen}$ for NIZK Key Pair Generation}
	\label{alg:3}
    
	\begin{algorithmic}[1]
		\REQUIRE{Key Pair ($\mathtt{pk}$, $\mathtt{vk}$)}
  
        \ENSURE{public parameter $\mathtt{pp}$}
        
        \STATE{$(\mathtt{pk}_{\mathbf{R}}, \mathtt{vk}_{\mathbf{R}})$ $\gets$ $\mathsf{NIZK.KeyGen}(\mathtt{pp_{ZK}})$: Output the key pair for relation $\mathbf{R}$.}

        \STATE{$(\mathtt{pk}_{in}, \mathtt{vk}_{in})$ $\gets$ $\mathsf{NIZK.KeyGen}(\mathtt{pp_{in}})$: Output the key pair for relation $\mathbf{R}_{in}$.}

        
        \STATE{ return $\mathtt{pk}:= (\mathtt{pk}_{\mathbf{R}},\mathtt{pk}_{in}) $ and $\mathtt{vk}:= (\mathtt{vk}_{\mathbf{R}},\mathtt{vk}_{in})$.}
    
	\end{algorithmic}

\end{algorithm}

\begin{algorithm}[!t]
	\caption{$\mathsf{DIDM.IRGen}$ for IR Generation}
	\label{alg:4}
    
	\begin{algorithmic}[1]
		\REQUIRE{a series of new DNN model identity records $[\mathtt{IR}_{i}]_{i \in [0,P]}$ for DIDM}
  
        \ENSURE{public parameter $\mathtt{pp}$; 
        public key address $\mathtt{IRPK}$; \\
        predicates $\Phi_{\mathtt{CWCD}}$,  $\Phi_{\mathtt{IWFW}}$, $\Phi_{\mathtt{MWCD}}$; randomness $cr_{kp}$;\\
        weight checkpoints $\mathbf{W}=[W_i]_{i \in [0,P]}$}

        \STATE{$\mathtt{CP}_{i}$ $\gets$  $\mathsf{CS.Commit}(\mathtt{pp_{C}}, \mathtt{IRPK}_{i}|| W_{i}||\Phi_{\mathtt{CWCD}}||\Phi_{\mathtt{IWFW}}||\Phi_{\mathtt{MWCD}};$ $ cr_{kp})$: Output the commitment $\mathtt{CP}_{i}$ for $\mathtt{IR}_{i}$.}
        \STATE{$\mathtt{IR}$$_{i}$ = $ ( \mathtt{CP}_{i}, \mathtt{IRPK}, W_{i}, (\Phi_{\mathtt{CWCD}},\Phi_{\mathtt{IWFW}},\Phi_{\mathtt{MWCD}}), \mathtt{AUX})$: Construct the identity record $\mathtt{IR}$$_{i}$ for $i$-th weight checkpoints $W_{i}$, where $\mathtt{AUX}$ includes the randomness $cr_{kp}$. }
        \STATE{[$\mathtt{IR}$$_{i}]_{i \in [0,P]}$ = $[[\mathtt{CP}_{i}],$ $\mathtt{IRPK},$ $[W_{i}],$$ (\Phi_{\mathtt{CWCD}},$ $\Phi_{\mathtt{IWFW}},$ $\Phi_{\mathtt{MWCD}}); $ $\mathtt{AUX}]_{i \in [0,P]}$: Construct the $[\mathtt{IR}_{i}]_{i \in [0,P]}$ of $\mathbf{W}$.}
        \STATE{Return [$\mathtt{IR}$$_{i}]_{i \in [0,P]}$ for the DIDM.}
    
	\end{algorithmic}

\end{algorithm}

\subsection{Accumulator-based Auditing for DIDM}

In A2-DIDM, the blockchain traces and tracks the IR of DNN model. 
A complete DIDM is represented by a collection of IRs, and three predicates control the uniqueness of the DNN weight checkpoint sequence. $\mathcal{P}$ performs DNN training off-chain, and then submits transactions to blockchain for verification, which include the commitment of training history (i.e., weight checkpoint sequences). 
The execution correctness is ensured by efficiently opening new IR commitments, which allows for the validation of the performed computations. 
Additionally, the computational integrity is captured in the satisfiability of three predicates for all new IRs.



\begin{figure}[!t]
    \centering
    \includegraphics[width=1.0\columnwidth]{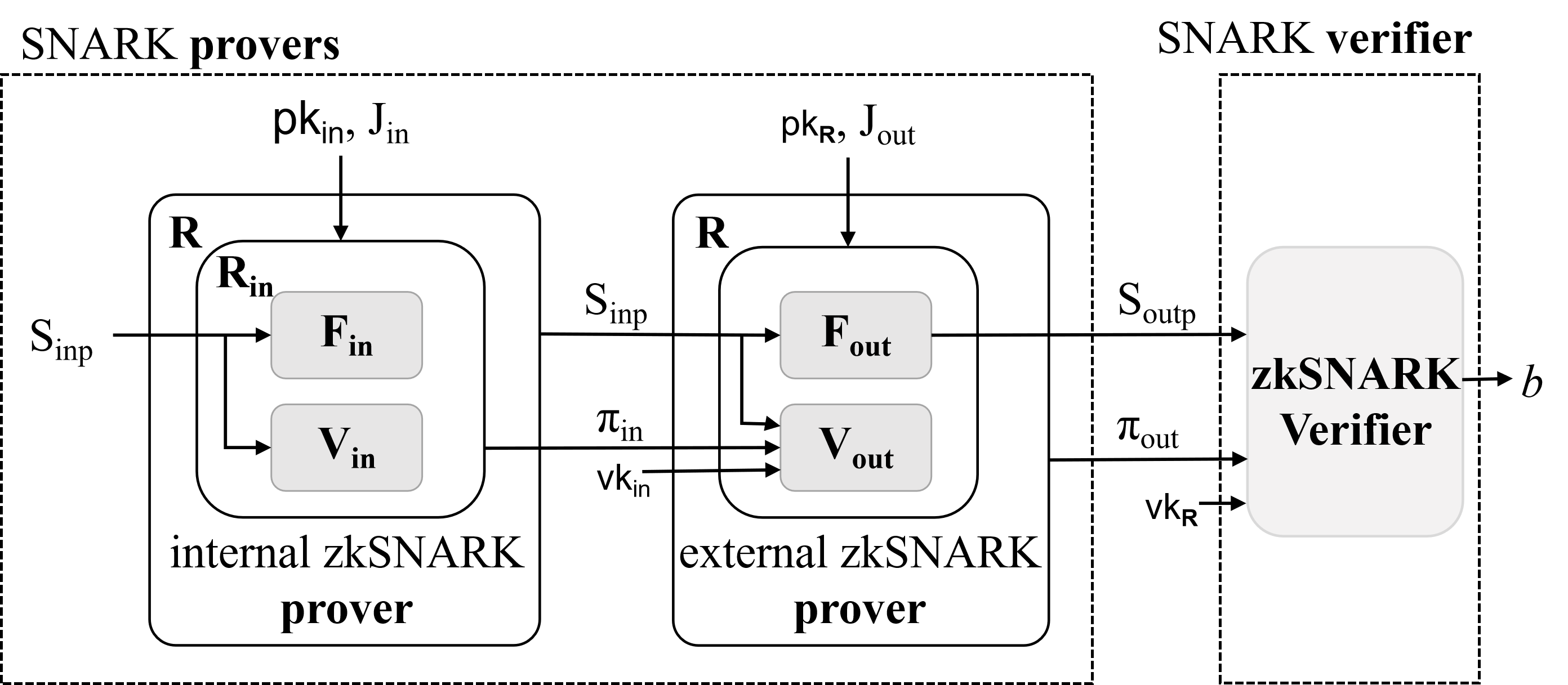}
    \caption{
    The circuit of predicate satisfiability (captured by relation $\mathbf{R}$) without accumulator. 
    $\mathbf{F_{in}}$ and $\mathbf{V_{in}}$ denote the proving and verification functions in Alg.~\ref{alg:5} (lines 5 and 4), respectively. Similarly, $\mathbf{F_{out}}$ and $\mathbf{V_{in}}$ denote the commit and verification function in Alg.~\ref{alg:6} (lines 6 and 5).
    }
    \label{fig3}
\end{figure}

To preserve the zero-knowledge of model weights and predicates, the predicate satisfiability of DIDM audit in A2-DIDM is instantiated as a two-step zkSNARKs shown in Fig.~\ref{fig3}. 
The internal zkSNARK \textbf{prover} first derives the public instance ($S_{inp}$) from the commitment components of the IRs and auxiliary information, and assembles the witness ($J_{in}$) from the remaining secret elements. 
It then generates an internal proof ($\pi_{in}$) for relation $\mathbf{R}_{in}$, which attests that the commitments open correctly and that the three predicates are satisfied. 
The external zkSNARK \textbf{prover} subsequently binds the internal verification key ($\mathtt{vk}_{in}$) by computing a collision-resistant hash and committing to the predicate commitment.
Moreover, it generates an external proof ($\pi_{out}$) for relation $\mathbf{R}$ asserting that internal proof ($\pi_{in}$) verifies on the instance ($S_{inp}$) under the bound key. 
The zkSNARK \textbf{verifier} finally checks only proof ($\pi_{out}$) on the compact public instance ($S_{outp}$) and outputs an accept/reject bit ($b$), while the internal witness remains hidden.


\begin{algorithm}[!t]
	\caption{$\mathsf{DIDM.InnerZKProve}$ for Internal zkSNARK \textbf{Prover} }
	\label{alg:5}
    
	\begin{algorithmic}[1]
		\REQUIRE{internal proof $\pi_{in}$ for $\mathbf{R}_{in}$}
  
        \ENSURE{identity records $[\mathtt{IR}_{i}]_{i \in [0,P]}$; 
        public parameters $\mathtt{pp}$
        }

        \STATE{Parse $[\mathtt{IR}_{i}]_{i \in [0,P]}$ $:=$  $([\mathtt{CP}$$_{i}]_{i \in [0,P]}$, $\mathtt{IRPK}, \mathbf{W},\Phi_{\mathtt{CWCD}},$ $ \Phi_{\mathtt{IWFW}},$ $ \Phi_{\mathtt{MWCD}}; $ $\mathtt{AUX}$).}

        \STATE{$S_{inp}$ = ($[\mathtt{CP}$$_{i}]_{i \in [0,P]}; \mathtt{AUX}$): Construct the instance $S_{inp}$.}
        
        \STATE{$J_{in}$ = $(\mathtt{IRPK}, \mathbf{W},\Phi_{\mathtt{CWCD}}, \Phi_{\mathtt{IWFW}}, \Phi_{\mathtt{MWCD}})$: Construct the witness $J_{in}$.}

         \STATE{For $\mathbf{R}_{in}$, accept iff $\mathsf{CS.Open}(\mathtt{pp_C},S_{inp},J_{in})=1$ $\land$ $\Phi_{\mathtt{CWCD}}(\mathbf{W}) = 1 $ $\land $ $\Phi_{\mathtt{IWFW}}(\mathbf{W}) = 1 $ $\land $ $\Phi_{\mathtt{MWCD}}(\mathbf{W}) = 1 $.}

        \STATE{$\pi_{in}$ $\gets$ $\mathsf{NIZK.Prove}$($\mathtt{pp}_{\mathtt{in}}, S_{inp}, J_{in}; \mathtt{pk}_{in}$): Output the internal zkSNARK proof.}

        \STATE{return internal proof $\pi_{in}$ and instance $S_{inp}$.}
     
	\end{algorithmic}

\end{algorithm}

 The concrete procedures corresponding to Fig.~\ref{fig3} are formalized in Alg.~\ref{alg:5} -- Alg.~\ref{alg:7}. Specifically, Alg.~\ref{alg:5} ($\mathsf{DIDM.InnerZKProve}$) parses the IRs to construct $(S_{inp},J_{in})$ and outputs $\pi_{in}$ after enforcing commitment opening and predicate satisfaction with respect to $\mathbf{R}_{in}$; if any predicate fails, no valid internal proof exists and the audit cannot proceed. 
To reduce the cost of handling large keys, Alg.~\ref{alg:6} ($\mathsf{DIDM.OuterZKProve}$) commits to the collision-resistant hash of $\mathtt{vk}_{in}$ and generates predicate commitment ($\mathtt{PCP}$).
It outputs $\pi_{out}$ that certifies the verification of $\pi_{in}$, thereby binding the IR commitment to the predicate commitment under $\mathbf{R}$. 
Alg.~\ref{alg:7} ($\mathsf{DIDM.Verify}$) forms $S_{outp}$ from $\mathtt{PCP}$ and $\mathtt{CP}$ and verifies $\pi_{out}$ using $\mathtt{vk}_{\mathbf{R}}$, returning a single bit that represents the DIDM audit result.


\begin{algorithm}[!t]
	\caption{$\mathsf{DIDM.OuterZKProve}$ for External zkSNARK \textbf{Prover}}
	\label{alg:6}
	\begin{algorithmic}[1]
		\REQUIRE{external proof $\pi_{out}$ for $\mathbf{R}$}
  
        \ENSURE{instance $S_{inp}$; internal proof $\pi_{in}$; key pair ($\mathtt{pk}$, $\mathtt{vk}$); public parameters $\mathtt{pp}$}

        \STATE Parse $\mathtt{pk}:=(\mathtt{pk}_{\mathbf{R}}, \mathtt{pk}_{in})$, $\mathtt{vk}:=(\mathtt{vk}_{\mathbf{R}}, \mathtt{vk}_{in})$.

        \STATE Parse $S_{inp}$ $:=$ ($[\mathtt{CP}$$_{i}]_{i \in [0,P]}; \mathtt{AUX}$), where $\mathtt{AUX}$ includes randomness $cr_{kp}$.

        \STATE{$\mathtt{h}(\mathtt{vk}_{in})$ $\gets$ $\mathsf{CRH.Val}(\mathtt{pp_H}, \mathtt{vk}_{in})$: Output the collision-resistant hash of verification key.}

        \STATE{$J_{out}$ = $ \mathtt{h}(\mathtt{vk}_{in})$: Construct the witness $J_{out}$.}

        \STATE{For $\mathbf{R}$, accept iff $\mathsf{NIZK.Verify}(\mathtt{pp_{in}},S_{inp},\pi_{in}; \mathtt{vk}_{in})=1$.}

        \STATE{$\mathtt{PCP}$ $\gets$ $\mathsf{CS.Commit}$($\mathtt{pp_C}, \mathtt{h}(\mathtt{vk}_{in}); cr_{kp}$): Output predicate commitment.}

        \STATE{$\pi_{out}$ $\gets$ $\mathsf{NIZK.Prove}$($\mathtt{pp}_{\mathtt{zk}}, S_{inp}, J_{in}; \mathtt{pk}_{\mathbf{R}}$): Output the external zkSNARK proof.}

        \STATE{return external proof $\pi_{out}$ and predicate commitment $\mathtt{PCP}$.}

	\end{algorithmic}

\end{algorithm}

\begin{algorithm}[!t]
	\caption{$\mathsf{DIDM.Verify}$ for zkSNARK \textbf{Verifier}}
	\label{alg:7}
	\begin{algorithmic}[1]
		\REQUIRE{verify result for DIDM audit}
  
        \ENSURE{verification key $\mathtt{vk}_{\mathbf{R}}$; public parameters $\mathtt{pp}$; external proof $\pi_{out}$; predicate commitment $\mathtt{PCP}$; IR commitment $[\mathtt{CP}_{i}]_{i \in [0,P]}$; aux}

        \STATE{$S_{outp}$ = $\left [ \mathtt{PCP}, [\mathtt{CP}_{i}]_{i \in [0,P]}\right ]$: Construct an instance $S_{outp}$.}

        \STATE{$b$ $\gets$ $\mathsf{NIZK.Verifiy}(\mathtt{pp_{ZK}},S_{outp},\pi_{out}; \mathtt{vk}_{\mathbf{R}})$: Verify the external proof $\pi_{out}$ and output a bit $b \in \{0,1\}$, where $b=1$ denotes DIDM audit accepts and $b=0$ denotes rejects.}

        \STATE{return the DIDM audit results.}
     
	\end{algorithmic}

\end{algorithm}


In Fig.~\ref{fig3}, the external zkSNARK \textbf{prover} should implement an in-circuit zkSNARK \textbf{verifier} to execute $\mathbf{V_{out}}$, where polynomial commitment verification typically involves expensive pairing operations. 
To reduce these pairing costs, A2-DIDM incorporates an accumulator so that the dominant cost of $\mathbf{V_{out}}$ is deferred to the DIDM \textbf{verifier}, allowing the DIDM \textbf{prover} to avoid a heavy verification circuit.
Fig.~\ref{fig4} illustrates the accumulator-based DIDM \textbf{prover} and \textbf{verifier} construction.
The DIDM \textbf{prover} embeds only the accumulator logic inside the DIDM audit circuit, which is substantially lighter than embedding a full zkSNARK \textbf{verifier} for $\mathbf{V_{out}}$. 
Concretely, the external zkSNARK \textbf{prover} takes the internal proof ($\pi_{in}$) and an empty accumulator ($\mathtt{ACCU}_{in} = \bot$) as input, and partially verifies the proof ($\pi_{in}$) by checking all components other than the pairing operations.
The accumulator \textbf{prover} derives the group elements required for efficient pairing operations, while the accumulator \textbf{verifier} validates the correctness of this accumulated computation. 
The DIDM \textbf{verifier} then invokes an additional \textbf{decider} procedure to complete the remaining pairing operations. 
Overall, the accumulator shifts a small number of costly pairings from the external zkSNARK \textbf{prover} side to on-chain execution by the \textbf{decider}.


\begin{figure}[!t]
    \centering
    \includegraphics[width=1.0\columnwidth]{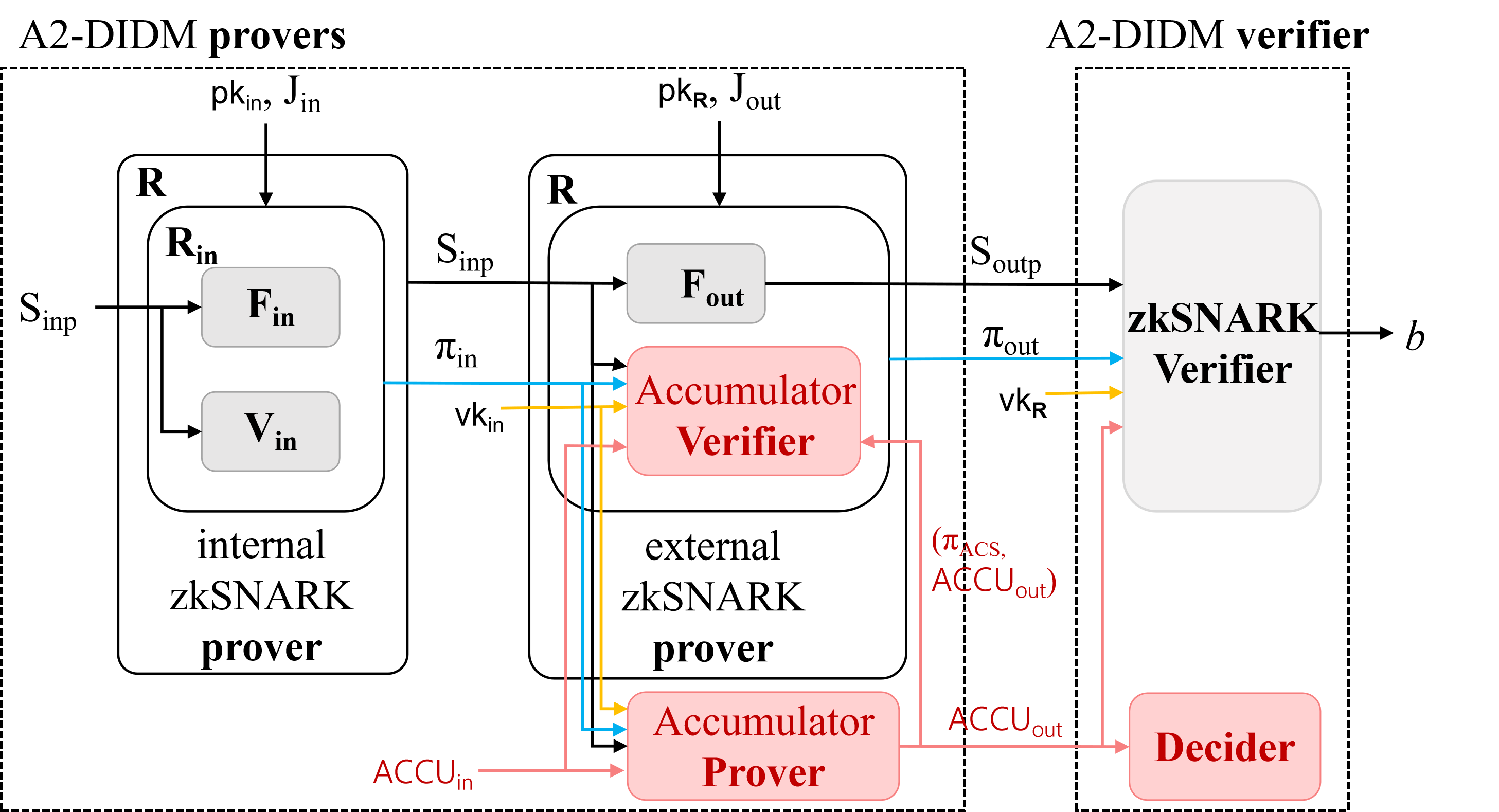}
    \caption{The construction of accumulator-based DIDM \textbf{provers} and \textbf{verifiers}.
    Red boxes are accumulator \textbf{prover}, \textbf{verifier} and \textbf{decider}, which are differet from Fig. \ref{fig3}. The yellow and blue lines are to distinguish the input of \textbf{prover}/\textbf{verifier}, and have no practical meaning. 
    Accumulator keys ($\mathtt{apk}$, $\mathtt{avk}$, $\mathtt{dk}$) are dropped from the figure for visual clarity. 
    }
    \label{fig4}
\end{figure}

To be specific, the accumulator scheme ($\mathsf{ACS}$) involves a tuple of PPT algorithms $\mathsf{ACS}$ $=$ $(\mathsf{ACS.Gen},$ $\mathsf{ACS.KeyGen},$ $\mathsf{ACS.Prove},$ $\mathsf{ACS.Verify},$ $\mathsf{ACS.Decide})$ as follows:
\begin{itemize}
    \item $\mathtt{pp_{ACS}}$ $\gets$ $\mathsf{ACS.Gen}(1^{\lambda})$: On input the security parameter ($\lambda$), output the public parameter ($\mathtt{pp_{ACS}}$) for accumulator scheme.
    
    \item ($\mathtt{apk}$, $\mathtt{avk}$, $\mathtt{dk}$) $\gets$ $\mathsf{ACS.KeyGen}(\mathtt{pp_{ACS}})$: On input public parameters ($\mathtt{pp_{ACS}}$), output an accumulator \textbf{prover} key ($\mathtt{apk}$), an accumulator \textbf{verifier} key ($\mathtt{avk}$) and \textbf{decider} key ($\mathtt{dk}$). 
    
    \item ($\mathtt{ACCU}_{out},\pi_{ACS}$) $\gets$ $\mathsf{ACS.Prove}( \mathtt{pp_{ACS}},\mathtt{apk}, Q_{in},\mathtt{vk}_{in},$ $ (\mathtt{ACCU}_{in}, \pi_{in}) )$: 
    This algorithm is executed by accumulator \textbf{prover}. It inputs accumulator \textbf{prover} key ($\mathtt{apk}$), verification key ($\mathtt{vk}_{in}$), an old accumulator proof tuple $(\mathtt{ACCU}_{in},  \pi_{in})$ and a new accumulator instance ($Q_{in}$).  
    It outputs a new accumulator ($\mathtt{ACCU}_{out}$) and accumulator proof ($\pi_{ACS}$) for $Q_{in}$. 
    
    \item $b \gets \mathsf{ACS.Verify}(\mathtt{pp_{ACS}},\mathtt{avk},\mathtt{vk}_{in}, Q_{in}, (\mathtt{ACCU}_{in}, \pi_{in}), $ $(\mathtt{ACCU}_{out},\pi_{ACS}))$:
    This algorithm is executed by accumulator \textbf{verifier}. 
    It checks the correctness of the accumulation step without executing the final expensive pairing operations.

    \item $b'$ $\gets$ $\mathsf{ACS.Decide} \left( \mathtt{pp_{ACS}}, \mathtt{dk}, \mathtt{ACCU}_{out} \right)$: 
    This algorithm is executed by \textbf{decider}. 
    It completes the deferred heavy verification, such as pairing operations for a polynomial commitment opening.
    
\end{itemize}
In Alg.~\ref{alg:6} (line~5), we replace the original zkSNARK \textbf{verifier} in $\mathbf{V_{out}}$ with an accumulator \textbf{verifier}; namely, $\mathbf{V_{out}}$ accepts iff
$\mathsf{ACS.Verify}(\mathtt{pp_{ACS}},$ $\mathtt{avk},$ $\mathtt{vk}_{in},$ $ Q_{in},(\mathtt{ACCU}_{in}, \pi_{in}),$ $(\mathtt{ACCU}_{out},\pi_{ACS}))=1$.
Moreover, DIDM \textbf{prover} initializes the accumulator to the empty value for the first accumulation, i.e., $\mathtt{ACCU}_{in}\gets\bot$. 
The accumulator instance $Q_{in}$ consists of polynomial-commitment-related information, including the polynomial commitment, the degree bound, the evaluation point, the claimed value, and the opening proof~\cite{bunz2020proof}. 
Finally, the DIDM \textbf{verifier} accepts iff both the zkSNARK \textbf{verifier} and the \textbf{decider} output the decision bit~$1$.

To further reduce the accumulator \textbf{verifier}’s cost, we leverage the linear homomorphism and batch-opening properties of pairing-based polynomial commitments (e.g., KZG) to batch the three predicate proofs ($\pi_{\Phi}$) and the commitment in $\mathbf{R}_{in}$ (Alg.~\ref{alg:5}, line~4).
This process ultimately outputs the batched internal proof ($\pi_{in}$).
Specifically, when verifying $P$ proof instances, each instance consists of a polynomial commitment (i.e., IR commitment $\mathtt{CP}_i$), an evaluation point (i.e., weight checkpoint $W_i$), an evaluation value ($v_i$), and a predicate proof ($\pi_{\Phi,i}$) that attests to the relation $\mathsf{Open}(\mathtt{CP}_i, W_i)=v_i$. 
The accumulator \textbf{verifier} first samples a random challenge scalar $\rho \in \mathbb{F}$, which is shared across all instances.
Leveraging the linear homomorphism of the commitment scheme, the \textbf{verifier} does not check the pairing equations individually but instead constructs a random linear combination. 
For instances sharing the same evaluation point ($W_i$), the \textbf{verifier} computes the batched commitment $\mathtt{CP}_{B} = \sum_{i=0}^P \rho^{i} \cdot \mathtt{CP}_i$ and the batched proof $\pi_{in} = \sum_{i=0}^P \rho^{i} \cdot \pi_{\Phi,i}$. 
According to the Schwartz-Zippel Lemma, if the batched equation $\mathsf{Open}(\mathtt{CP}_{B}, \mathbf{W}) = \sum \rho^{i} \cdot v_i$ holds, it guarantees with overwhelming probability that all predicate proofs are valid.

Moreover, the accumulator \textbf{verifier} only executes lightweight group operations, such as scalar multiplications and group additions, to compute the above linear combination, whereas the most expensive bilinear pairing checks are deferred to the on-chain \textbf{decider}. 
Consequently, the external zkSNARK \textbf{prover} does not need to encode pairing computations as circuit constraints.
Specifically, when verifying a zkSNARK built on a pairing-based polynomial commitment scheme, the \textbf{verifier} derives two checkpoints in $2\mathbb{G}_{1}$ that are used for the final bilinear pairing check. 
The accumulator in A2-DIDM exposes these $2\mathbb{G}_{1}$ checkpoints as public inputs. 
In particular, the accumulator value ($\mathtt{ACCU}_{out}$) involves the $2\mathbb{G}_{1}$ pairing checkpoints that summarize the aggregated validity of all batched proofs. 
The on-chain \textbf{decider} then verifies these checkpoints via a single pairing check.
Accordingly, the verification of $P$ predicate proofs and commitments reduces to $O(P)$ off-chain group operations incurs only $O(P)$ off-chain group operations, while the on-chain \textbf{decider} verifies the accumulated result with a single pairing check, which suffices to validate the batched proof.

For each transaction, the on-chain DIDM \textbf{verifier} performs a constant number of checks on the external zkSNARK proof and executes a constant number of checks via the \textbf{decider}, jointly completing the final pairing verification. 
Both the external proof and the accumulator have constant sizes, and the number of expensive group operations executed by the zkSNARK \textbf{verifier} and \textbf{decider} does not increase with the volume of off-chain computations. 
Therefore, the overall on-chain verification overhead is independent of the transaction content and remains constant. 
A2-DIDM achieves succinct proofs and a lightweight \textbf{verifier} circuit through the accumulator.




\section{Theoretical Analysis}\label{sec:ana}

\subsection{Security of Accumulator-based DIDM Audit}

\textbf{Completeness of accumulator-based audit scheme.} 
We define the completeness of the DIDM audit scheme via the experiment ($\mathsf{Exp}^{\mathcal{C}}_{\mathsf{DIDM}}$). 
For all valid $\mathbf{W}$, the accumulator-based audit scheme is complete if the following Eq.~\ref{eq:comp} holds:

\begin{equation}
\Pr\big[\mathsf{Exp}^{\mathcal{C}}_{\mathsf{DIDM}}(1^\lambda)=1\big]
\ge 1-\mathsf{negl}(\lambda).
\label{eq:comp}
\end{equation}

\begin{mdframed}
\textbf{$\mathsf{Exp}^{\mathcal{C}}_{\mathsf{DIDM}}(1^\lambda)$:}
\begin{enumerate}
\item Sample parameters and compute key pairs:
\[
\left\{
\begin{aligned}
&\mathtt{pp} \gets \mathsf{DIDM.Gen}(1^\lambda), \mathtt{pp_{ACS}} \gets \mathsf{ACS.Gen}(1^\lambda)\\
&(\mathtt{IRPK},\mathtt{IRSK}) \gets \mathsf{DIDM.AddrGen}(\mathtt{pp})\\
&(\mathtt{pk},\mathtt{vk})\gets \mathsf{DIDM.KeyGen}(\mathtt{pp})\\
&(\mathtt{apk},\mathtt{avk},\mathtt{dk}) \gets \mathsf{ACS.KeyGen}(\mathtt{pp_{ACS}})
\end{aligned}
\right.
\]

\item Compute IR and commitment:
\begin{align*}
    \left(
    \begin{matrix}
        [\mathtt{IR}_i]_{i\in[0,P]}\\
        [\mathtt{CP}_i]_{i\in[0,P]}
    \end{matrix}
    \right)
\gets\mathsf{DIDM.IRGen}\left(
\begin{matrix}
    \mathtt{pp}, \mathtt{IRPK}, \Phi_{\mathtt{CWCD}},\\ 
    \mathbf{W},\Phi_{\mathtt{IWFW}}, \Phi_{\mathtt{MWCD}}
\end{matrix}
\right)
\end{align*}

\item Compute internal zkSNARK proof:
\begin{align*}
    \pi_{in} \gets \mathsf{DIDM.InnerZKProve}\left( 
    \begin{matrix}
        \mathtt{pp},\mathtt{pk},
        S_{inp},J_{in}
    \end{matrix}
    \right)
\end{align*}

\item Compute accumulator proof:
$$
\left(
\begin{matrix}
    \mathtt{ACCU}_{out},\\
    \pi_{ACS} 
\end{matrix}
\right) \gets 
\mathsf{ACS.Prove}\left(
\begin{matrix}
    \mathtt{pp_{ACS}},\mathtt{apk},Q_{in},\\\mathtt{vk}_{in},
    \mathtt{ACCU}_{in},\pi_{in}
\end{matrix}
\right)
$$

\item Compute external zkSNARK proof: 
$$
\left(
\begin{matrix}
    \pi_{out},\\
    \mathtt{PCP}
\end{matrix}
\right) \gets 
\mathsf{DIDM.OuterZKProve}
\left(\begin{matrix}
\mathtt{pp}, \mathtt{pk},\pi_{in},\\
\mathtt{vk}, S_{inp},J_{out}
\end{matrix}\right)
$$

\item Output $1$ iff all checks accept:
\[
\left\{
\begin{aligned}
&\Phi_{\mathtt{CWCD}}(\mathbf{W})= 
\Phi_{\mathtt{IWFW}}(\mathbf{W})= 
\Phi_{\mathtt{MWCD}}(\mathbf{W})=1\\
&\mathsf{CS.Open}(\mathtt{pp},S_{inp},J_{in})=1\\
&\mathsf{DIDM.Verify}(\mathtt{pp},\mathtt{vk},S_{outp},\pi_{out})=1\\
&\mathsf{ACS.Verify}\left(
\begin{matrix} 
\mathtt{pp_{ACS}},\mathtt{avk},\mathtt{vk},
\mathtt{ACCU}_{in},\\\pi_{in},Q_{in},
\mathtt{ACCU}_{out},\pi_{ACS}
\end{matrix}\right ) =1\\
&\mathsf{ACS.Decide}(\mathtt{pp_{ACS}},\mathtt{dk},\mathtt{ACCU}_{out})=1
\end{aligned}
\right.
\]

Otherwise output $0$.
\end{enumerate}
\end{mdframed}

\textbf{Knowledge soundness of accumulator-based audit scheme.}
We define the knowledge soundness of DIDM via the experiment
($\mathsf{Exp}^{\mathcal{KS}}_{\mathsf{DIDM},\mathcal{A}}$). 
For all PPT adversaries $\mathcal{A}$, DIDM is knowledge-sound
if Eq.~\ref{eq:ks} holds:

\begin{equation}
\Pr\big[\mathsf{Exp}^{\mathcal{KS}}_{\mathsf{DIDM},\mathcal{A}}(1^\lambda)=1\big]
\le \mathsf{negl}(\lambda).
\label{eq:ks}
\end{equation}

\begin{mdframed}
\textbf{$\mathsf{Exp}^{\mathcal{KS}}_{\mathsf{DIDM},\mathcal{A}}(1^\lambda)$:}
\begin{enumerate}
\item Sample public parameters and verification keys:
\[
\left\{
\begin{aligned}
&\mathtt{pp} \gets \mathsf{DIDM.Gen}(1^\lambda),
\mathtt{pp_{ACS}} \gets \mathsf{ACS.Gen}(1^\lambda)\\
&(\mathtt{IRPK},\mathtt{IRSK}) \gets \mathsf{DIDM.AddrGen}(\mathtt{pp})\\
&(\mathtt{pk},\mathtt{vk})\gets \mathsf{DIDM.KeyGen}(\mathtt{pp})\\
&(\mathtt{apk},\mathtt{avk},\mathtt{dk}) \gets \mathsf{ACS.KeyGen}(\mathtt{pp_{ACS}})
\end{aligned}
\right.
\]

\item The adversary outputs a purported proof transcript:
\[
(\pi_{out}^\star,\pi_{ACS}^\star,\mathtt{ACCU}_{out}^\star,S_{outp}^\star)
\gets \mathcal{A}\left(
\begin{matrix}
    \mathtt{pp},\mathtt{pp_{ACS}},\\
    \mathtt{vk},\mathtt{avk},\mathtt{dk}
\end{matrix}
\right)
\]

\item The adversary runs a PPT extractor $\mathcal{E}$ to output witnesses and instances:
\[
\left(
\begin{matrix}
    S_{inp}^\star,\mathtt{ACCU}_{in}^\star,\\
     J_{out}^\star,J_{in}^\star,\pi_{in}^\star
\end{matrix}
\right)
\gets \mathcal{E}^{\mathcal{A}}\left(
\begin{matrix}
    \mathtt{pp},\mathtt{pp_{ACS}},
    \mathtt{vk},\pi_{out}^\star,S_{outp}^\star\\
    \mathtt{dk},\mathtt{avk},\pi_{ACS}^\star,
    \mathtt{ACCU}_{out}^\star 
\end{matrix}
\right)
\]

\item Accept iff all public verifications pass:
\[
\left\{
\begin{aligned}
&\mathsf{DIDM.Verify}(\mathtt{pp},\mathtt{vk},S_{outp}^\star,\pi_{out}^\star)=1\\
&\mathsf{ACS.Verify}\left(
\begin{matrix}
    \mathtt{pp_{ACS}},\mathtt{avk},\mathtt{vk},Q_{in},\pi_{in}^\star, \\
    \mathtt{ACCU}_{in}^\star,\mathtt{ACCU}_{out}^\star,\pi_{ACS}^\star 
\end{matrix}\right)=1\\
&\mathsf{ACS.Decide}(\mathtt{pp_{ACS}},\mathtt{dk},\mathtt{ACCU}_{out}^\star)=1
\end{aligned}
\right.
\]

\item Output $1$ if the experiment accepts but no valid witness exists, i.e., \\
$
(S_{inp}^\star,J_{in}^\star) \notin \mathbf{R_{in}}$ $\wedge$ $
((S_{inp}^\star,\pi_{in}^\star),J_{out}^\star)\notin \mathbf{R}
$

Otherwise output $0$.
\end{enumerate}
\end{mdframed}

\textbf{Zero-knowledge of accumulator-based DIDM audit scheme.}
We define the zero-knowledge of DIDM via the experiment
($\mathsf{Exp}^{\mathcal{ZK}}_{\mathsf{DIDM},\mathcal{A}}$).
For all PPT adversaries ($\mathcal{A}$), DIDM is zero-knowledge if there exists a PPT simulator ($\mathcal{S}$) such that Eq.~\ref{eq:zk} holds:

\begin{equation}
\begin{aligned}
| \Pr[\mathsf{Exp}^{\mathcal{ZK}\text{-}\mathsf{Real}}_{\mathsf{DIDM},\mathcal{A}}(1^\lambda)=1]
 - \Pr[\mathsf{Exp}^{\mathcal{ZK}\text{-}\mathsf{Sim}}_{\mathsf{DIDM},\mathcal{A},\mathcal{S}}&(1^\lambda)=1] | \\
&\le \mathsf{negl}(\lambda).
\end{aligned}
\label{eq:zk}
\end{equation}

\begin{mdframed}
\textbf{$\mathsf{Exp}^{\mathcal{ZK}\text{-}\mathsf{Real}}_{\mathsf{DIDM},\mathcal{A}}(1^\lambda)$:}

\begin{enumerate}
\item Sample parameters and compute key pairs:

\[
\left\{
\begin{aligned}
&\mathtt{pp} \gets \mathsf{DIDM.Gen}(1^\lambda), \mathtt{pp_{ACS}} \gets \mathsf{ACS.Gen}(1^\lambda)\\
&(\mathtt{IRPK},\mathtt{IRSK}) \gets \mathsf{DIDM.AddrGen}(\mathtt{pp})\\
&(\mathtt{pk},\mathtt{vk})\gets \mathsf{DIDM.KeyGen}(\mathtt{pp})\\
&(\mathtt{apk},\mathtt{avk},\mathtt{dk}) \gets \mathsf{ACS.KeyGen}(\mathtt{pp_{ACS}})
\end{aligned}
\right.
\]

\item Compute IR and commitment:
\begin{align*}
    \left(
    \begin{matrix}
        [\mathtt{IR}_i]_{i\in[0,P]}\\
        [\mathtt{CP}_i]_{i\in[0,P]}
    \end{matrix}
    \right)
\gets\mathsf{DIDM.IRGen}\left(
\begin{matrix}
    \mathtt{pp}, \mathtt{IRPK}, \Phi_{\mathtt{CWCD}},\\ 
    \mathbf{W},\Phi_{\mathtt{IWFW}}, \Phi_{\mathtt{MWCD}}
\end{matrix}
\right)
\end{align*}

\item Compute internal zkSNARK proof:
\begin{align*}
    \pi_{in} \gets \mathsf{DIDM.InnerZKProve}\left( 
    \begin{matrix}
        \mathtt{pp},\mathtt{pk},
        S_{inp},J_{in}
    \end{matrix}
    \right)
\end{align*}

\item Compute accumulator proof:
$$
\left(
\begin{matrix}
    \mathtt{ACCU}_{out},\\
    \pi_{ACS} 
\end{matrix}
\right) \gets 
\mathsf{ACS.Prove}\left(
\begin{matrix}
    \mathtt{pp_{ACS}},\mathtt{apk},Q_{in},\\\mathtt{vk}_{in},
    \mathtt{ACCU}_{in},\pi_{in}
\end{matrix}
\right)
$$

\item Compute external zkSNARK proof:
$$
\left(
\begin{matrix}
    \pi_{out},\\
    \mathtt{PCP}
\end{matrix}
\right) \gets 
\mathsf{DIDM.OuterZKProve}
\left(\begin{matrix}
\mathtt{pp}, \mathtt{pk},\pi_{in},\\
\mathtt{vk}, S_{inp},J_{out}
\end{matrix}\right)
$$

\item $\mathcal{A}$ receives the real transcript and outputs a bit:
\[
b \gets \mathcal{A}\left(
\begin{matrix}
\mathtt{pp},\mathtt{pp_{ACS}},
\mathtt{vk},\mathtt{avk},[\mathtt{IR}_i,\mathtt{CP}_i]_{i\in[0,P]},\\
\mathtt{dk},\pi_{in},\pi_{ACS},\mathtt{ACCU}_{out},
\pi_{out},S_{outp}
\end{matrix}
\right)
\]

\item Output $1$ iff $b=1$.
\end{enumerate}
\end{mdframed}

\begin{mdframed}
\textbf{$\mathsf{Exp}^{\mathcal{ZK}\text{-}\mathsf{Sim}}_{\mathsf{DIDM},\mathcal{A},\mathcal{S}}(1^\lambda)$:}
\begin{enumerate}
\item Sample parameters and verification keys:
\[
\left\{
\begin{aligned}
&\mathtt{pp} \gets \mathsf{DIDM.Gen}(1^\lambda), \mathtt{pp_{ACS}} \gets \mathsf{ACS.Gen}(1^\lambda)\\
&(\mathtt{IRPK},\mathtt{IRSK}) \gets \mathsf{DIDM.AddrGen}(\mathtt{pp})\\
&(\mathtt{pk},\mathtt{vk})\gets \mathsf{DIDM.KeyGen}(\mathtt{pp})\\
&(\mathtt{apk},\mathtt{avk},\mathtt{dk}) \gets \mathsf{ACS.KeyGen}(\mathtt{pp_{ACS}})
\end{aligned}
\right.
\]

\item The simulator $\mathcal{S}$ outputs a simulated transcript without using any witness:
\[
\left(
\begin{matrix}
[\mathtt{IR}_i^\star,\mathtt{CP}_i^\star]_{i\in[0,P]},
S_{outp}^\star,\\
\pi_{in}^\star, \pi_{ACS}^\star,
\mathtt{ACCU}_{out}^\star,\pi_{out}^\star
\end{matrix}
\right)
\gets
\mathcal{S}\left(
\begin{matrix}
\mathtt{pp},\mathtt{pp_{ACS}},\\
\mathtt{vk},\mathtt{avk},\mathtt{dk}
\end{matrix}
\right)
\]

\item The adversary receives the simulated transcript and outputs a bit:
\[
b \gets \mathcal{A}\left(
\begin{matrix}
\mathtt{pp},\mathtt{pp_{ACS}},
\mathtt{vk},\mathtt{avk},
[\mathtt{IR}_i^\star,\mathtt{CP}_i^\star]_{i\in[0,P]},\\
\mathtt{dk},\pi_{in}^\star,\pi_{ACS}^\star,\mathtt{ACCU}_{out}^\star,
\pi_{out}^\star,S_{outp}^\star
\end{matrix}
\right)
\]

\item Output $1$ iff $b=1$.
\end{enumerate}
\end{mdframed}

\subsection{Security of IR and Predicates}

Refer to the threat model and the two properties of effective proof, we define three attacks by adversaries with capabilities in our threat model. Our following discussion shows that none of them can defeat A2-DIDM under our threat model.

\textbf{Security of continuity of weight checkpoint distance.} 
An adversary can construct a sequence of weight checkpoints that deceives verifiers (\textbf{M1}). 
We note that such a forged history of DNN models cannot simultaneously satisfy $\Phi_{\mathtt{CWCD}}$ and $\Phi_{\mathtt{IWFW}}$. 
Due to the randomness involved in $\Phi_{\mathtt{IWFW}}$, the adversary must choose a randomly initialized model and interpolate it with the stolen model. 
However, the weight distance between a randomly initialized model and a converged model is larger than the weight distance between the real initial model and the converged model. 
Specifically, the distance between the real initial model and the converged model is accumulated step by step along the same training trajectory. 
By contrast, the forged checkpoints ($\widetilde{\mathbf{W}}$) that are independently and randomly initialized do not lie on this trajectory, enabling the $\mathcal{V}$ to flag the forged checkpoint as suspicious via $\Phi_{\mathtt{CWCD}}$.

\textbf{Security of distance from initial weight to final weight.} By placing $W_{0}$ in a favorable position, $\mathcal{A}$ can steer the DNN training process towards its desired final result $f_{W_{P}}$, thereby forging a training history to cheat the verifier (\textbf{M2}). 
However, $\Phi_\mathtt{IWFW}$ ensures every weight parameter ($[W_{0}^{(t)}]_{t \in [0,T]}$) of the $T$-layer DNN model has been randomly initialized from GMM distribution, while it also guarantees the the independence between any two weight parameters of the initial model.

\textbf{Security of monotonicity of weight checkpoint distribution.} 
Since $\mathcal{A}$ owns the labeled dataset ($\mathbf{D}_\mathcal{A}$), $\mathcal{A}$ can forge weight checkpoints ($\widetilde{\mathbf{W}}$) via knowledge distillation (\textbf{M3}). 
Although distilling from a converged DNN teacher may help an adversary evade the detection of $\Phi_{\mathtt{CWCD}}$ and $\Phi_{\mathtt{IWFW}}$, the consecutive checkpoints produced by distillation usually exhibit larger distributional deviations and can still be captured by $\Phi_{\mathtt{MWCD}}$.
Specifically, knowledge distillation is optimized to fit the teacher model’s outputs, so checkpoint updates are no longer governed only by the smooth parameter updates of SGD, making the weight distributions between adjacent checkpoints more likely to exhibit abnormal fluctuations.
Hence, $\Phi_\mathtt{MWCD}$ verifies the monotonicity of model weight distributions to ensure the genuineness of the owner’s checkpoint sequence.


\textbf{Data privacy and function privacy.} 
$\mathcal{A}$ may attempt to tamper with the NIZK proof or the IR commitment, or to exploit the publicly available protocol algorithms to steal the model owner's private information (\textbf{M4}). 
A2-DIDM utilizes the unique training history to identify the DIDM, without exposing the owner's private dataset.
By the binding and hiding properties of the commitment scheme, $\mathtt{CP}$ serves as a commitment to all attributes of $\mathtt{IR}$ without revealing any information, thereby ensuring data privacy. 
Moreover, the relations among the predicates are hidden by the external SNARK prover, which guarantees function privacy. 
Therefore, the zkSNARK proofs ensures that an adversary cannot craft synthetic data points to induce a specific gradient, because when selecting data points it does not know the corresponding weight ($W_i$).

\subsection{Efficiency of Accumulator-based Audit for A2-DIDM}

\textbf{Efficiency of Accumulator-based zkSNARK Scheme.}
The efficiency requirement of A2-DIDM necessitates the accumulator \textbf{verifier} algorithm to exhibit sub-linear time complexity relative to the size of zkSNARK circuit. More precisely, the accumulator operation should be performed in sub-linear time with respect to the zkSNARK circuit and should not exhibit growth with each accumulation step\footnote{The research work in paper\cite{bowe2020zexe} has already demonstrated the efficiency of the SNARK \textbf{verifier}(the circuit shown in Figure \ref{fig3}). Therefore, the efficiency of A2-DIDM primarily depends on whether the algorithm of the accumulator \textbf{verifier} is sub-linear.}.
Our work builds upon the accumulation scheme by Benedikt et al.\cite{bunz2020proof} and Sean et al. \cite{bowe2020zexe}, and refer to their research work for detailed definitions of the accumulator \textbf{verifier} circuit.

\begin{definition}
    Let $\mathsf{AVer}_{(\lambda,n,I,M)}$ denotes the computation circuit for the accumulator \textbf{verifier}. Here, $\lambda$ represents the security parameter, $n$ represents the number of accumulator proof tuples ($\mathtt{ACCU}$,$\pi$), $I$ denotes the maximum size of the index for the binary relation $\mathbf{R}$, and $M$ denotes the maximum size of each instance. 
    Moreover, the size of $\mathsf{AVer}_{(\lambda,n,I,M)}$, $\mathtt{avk}$ and $\mathtt{ACCU}$ are denoted by $S_{\mathsf{AVer}}(\lambda,n,I,M)$, $S_{\mathtt{avk}}(\lambda,n,I)$ and $S_{\mathtt{ACCU}}(\lambda,n,I)$, respectively. 

\end{definition}

As shown in Figure \ref{fig4}, an accumulator \textbf{verifier} instance involves an accumulator verification key ($\mathtt{avk}$), an accumulator value ($\mathtt{ACCU}$), and some supplemental data of size ($k$). 
Refer to the paper\cite{bunz2020proof}, 
we assume that $p$ is the number of constraints in the predicate $\Phi$ (e.g., $\Phi_{\mathtt{CWCD}}$, $\Phi_\mathtt{IWFW}$, $\Phi_\mathtt{MWCD}$), and there exists a effective simulator $\mathcal{S}$ that implements a zero-knowledge accumulator.

We quantify the size of the accumulator \textbf{verifier} circuit as $S_{\mathsf{AVer}}^{*}(\lambda,n,I,k) $ $=$  $S_{\mathsf{AVer}}(\lambda, n, I, S_{\mathtt{avk}}(\lambda,n,I)+S_{\mathtt{ACCU}}(\lambda,n,I)+k)$. 
We note that $S_{\mathtt{ACCU}}(\lambda,n,I)$ is constrained by $\lambda$, $n$, $I$, indicating that the size of the accumulator is independent of the number of input instances. 
Moreover, we assume that for some $\mathcal{S}_{0}(n,k)=O(nk)$, the size of circuit $\mathcal{C}_{\mathsf{AVer}, \Phi}^{(\lambda,I,M)}$ is denoted as follows:
\begin{equation}\label{eq:4}
    \mathcal{S}(\lambda,p,n,k,I) =  p + \mathcal{S}_{0}(n,k) + S_{\mathsf{AVer}}^{*}(\lambda,n,I,k)
\end{equation}

In order to determine the efficiency bound of the accumulator \textbf{verifier}, our objective is to identify the minimum index size bound function, denoted as $I$, that satisfies the condition $\mathcal{S}(\lambda,p,n,k,I(\lambda,$ $p,n,k))$ $\leq$ $I(\lambda,p,n,k)$.

\begin{lemma}
    We suppose that the running time of the simulator $\mathcal{S}$, with the given parameters $\lambda$, $p$, $n$, $k$ $\in$ $\mathbb{N}$, is bounded by the size function $I(\lambda,p,n,k)$ as follows:
    $$ \mathcal{S}(\lambda,p,n,k,I(\lambda,p,n,k)) \leq I(\lambda,p,n,k).
    $$
    Specifically, we note that for any $\lambda$, $p$, $n$, $k$, the ratio of accumulator \textbf{verifier} circuit size to index size, denoted as $S_{\mathsf{AVer}}^{*}(\lambda,n,I,k)/I$, is monotonically decreasing with respect to $I$.
    Furthermore, we assume that a constant $\alpha$ $\in$ $[0,1]$ controls the rate of decay of the exponential function while an increasing polynomial $\beta$ denotes the base growth of the exponential function. For sufficiently large $\lambda,p,n,k$, two inequalities hold as follows:
    \begin{equation*}
        \begin{cases}
            S_{\mathsf{AVer}}^{*}(\lambda,n,I,k) \leq I^{1-\alpha}\beta(\lambda, n, k)\\
            I(\lambda,p,n,k) \leq O(p+\beta(\lambda,n,k)^{1/\alpha})
        \end{cases}
    \end{equation*}

\label{lemma1}
\end{lemma}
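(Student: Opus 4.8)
The plan is to read the stated bound on $I(\lambda,p,n,k)$ as a bound on the \emph{minimal fixed point} of the self-referential sizing constraint $\mathcal{S}(\lambda,p,n,k,I)\le I$, and to close the recursion using the strict sublinearity of the accumulator \textbf{verifier} in the index size. First I would rewrite the feasibility requirement via Eq.~(\ref{eq:4}) as
\[
p+\mathcal{S}_{0}(n,k)+S_{\mathsf{AVer}}^{*}(\lambda,n,I,k)\le I,
\]
noting that the only $I$-dependent term on the left is $S_{\mathsf{AVer}}^{*}(\lambda,n,I,k)$. The role of the monotonicity hypothesis---that $S_{\mathsf{AVer}}^{*}(\lambda,n,I,k)/I$ is decreasing in $I$---is to make the feasible set $\{I:\mathcal{S}(\lambda,p,n,k,I)\le I\}$ upward closed: writing $S_{\mathsf{AVer}}^{*}=I\cdot g(I)$ with $g$ decreasing, the gap $I-S_{\mathsf{AVer}}^{*}=I\,(1-g(I))$ is nondecreasing once $g(I)<1$, so feasibility, once attained at some $I_{0}$, persists for all $I\ge I_{0}$. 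This justifies defining $I(\lambda,p,n,k)$ as the least feasible index size and reduces the lemma to exhibiting one feasible value of size $O(p+\beta^{1/\alpha})$.

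Next I would substitute the sublinear growth bound $S_{\mathsf{AVer}}^{*}(\lambda,n,I,k)\le I^{1-\alpha}\beta(\lambda,n,k)$ to obtain the sufficient feasibility condition
\[
p+\mathcal{S}_{0}(n,k)+I^{1-\alpha}\beta(\lambda,n,k)\le I.
\]
The crux is the algebraic balance at the scale $I\asymp\beta^{1/\alpha}$: since $\bigl(\beta^{1/\alpha}\bigr)^{1-\alpha}\beta=\beta^{(1-\alpha)/\alpha+1}=\beta^{1/\alpha}$, the recursive term $I^{1-\alpha}\beta$ exactly matches $I$ at $I=\beta^{1/\alpha}$, so the map $h(I):=I-I^{1-\alpha}\beta$ vanishes there and turns positive, of order $\beta^{1/\alpha}$, once $I$ is inflated by a constant factor past the balance point. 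Evaluating $h$ at $I=C\,(p+\beta^{1/\alpha})$ and using subadditivity of $x\mapsto x^{1-\alpha}$ (valid since $1-\alpha\in[0,1]$) shows that $h(I)$ grows at least linearly in both $p$ and $\beta^{1/\alpha}$, so for $C$ depending only on $\alpha$ we get $h(I)\ge p+\mathcal{S}_{0}(n,k)$. Here the additive term $\mathcal{S}_{0}(n,k)=O(nk)$ is absorbed because $\beta$ is an increasing polynomial majorizing $nk$ and $1/\alpha\ge 1$, whence $\beta^{1/\alpha}\gtrsim nk$; the $p$ term is covered directly by the $p$-component of $I$. This establishes feasibility of an index of size $O(p+\beta^{1/\alpha})$ and hence the claimed bound on the minimal $I(\lambda,p,n,k)$.

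The main obstacle I anticipate is precisely the self-referential character of the constraint---$I$ appears on both sides through $S_{\mathsf{AVer}}^{*}(\lambda,n,I,k)$---so the argument is a fixed-point existence-and-bounding argument rather than a direct evaluation, and the monotonicity hypothesis is what guarantees the fixed point is a well-defined threshold. The recursion closes only because $\alpha>0$ renders $S_{\mathsf{AVer}}^{*}$ \emph{strictly} sublinear in $I$; for $\alpha=0$ the term $I^{1-\alpha}\beta=I\beta$ never produces positive slack when $\beta\ge 1$, and the bound $\beta^{1/\alpha}$ degenerates, so I would state explicitly that the conclusion is meaningful exactly for strictly sublinear verifiers, $\alpha\in(0,1]$. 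A secondary technical point is to check that the hidden $O(\cdot)$ constant can be chosen uniformly over all sufficiently large $(\lambda,p,n,k)$; this follows because the balancing constant $C$ depends only on $\alpha$ and on the implicit constant in $\mathcal{S}_{0}=O(nk)$, both of which are fixed independently of the parameters.
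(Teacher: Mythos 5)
Your proof is correct and shares the paper's overall strategy — use the monotonicity of $S_{\mathsf{AVer}}^{*}(\lambda,n,I,k)/I$ to reduce the lemma to exhibiting a single feasible index, then use the sublinear growth bound to produce one of size $O(p+\beta^{1/\alpha})$ — but the two verifications are organized differently. The paper's candidate is $\max\bigl(I_{0},\,2(p+\mathcal{S}_{0}(n,k))\bigr)$, where $I_{0}$ is the least index with $S_{\mathsf{AVer}}^{*}/I<1/2$; feasibility then follows from the clean split $\mathcal{S}<I/2+I/2=I$, and the growth hypothesis is invoked only once, to show $I_{0}\le(2\beta)^{1/\alpha}$. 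You instead plug $C(p+\beta^{1/\alpha})$ directly into the fixed-point inequality and close it with subadditivity of $x^{1-\alpha}$ together with the balance identity $(\beta^{1/\alpha})^{1-\alpha}\beta=\beta^{1/\alpha}$, which requires an extra Young-type estimate $\beta p^{1-\alpha}\le(1-\alpha)p+\alpha\beta^{1/\alpha}$ to control the cross term; the paper's half-split avoids this algebra entirely, while your version makes explicit why $\beta^{1/\alpha}$ is the natural scale and where the constant $C=C(\alpha)$ comes from. Two of your side remarks are genuine improvements over the paper: the upward-closedness argument justifies that bounding any one feasible index bounds the minimal one (the paper silently conflates its constructed candidate with $I(\lambda,p,n,k)$), and the observation that the conclusion degenerates at $\alpha=0$ — where $I^{1-\alpha}\beta=I\beta$ never yields positive slack and $\beta^{1/\alpha}$ is undefined — identifies a real gap in the lemma's stated range $\alpha\in[0,1]$. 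One caveat cuts the other way: your absorption of $\mathcal{S}_{0}(n,k)=O(nk)$ into $\beta^{1/\alpha}$ assumes $\beta$ majorizes $nk$, which the lemma does not state; the paper sidesteps this only via the equally unjustified claim that $p+\mathcal{S}_{0}(n,k)=O(p)$, so neither proof is airtight on that point, and yours is no worse.
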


\begin{proof}
    Due to the monotone decreasing of $S_{\mathsf{AVer}}^{*}(\lambda,n,I,k)/I$, there exists a minimum integer $I_{0}$ $\gets$ $I_{0}(\lambda,n,k)$ such that $S_{\mathsf{AVer}}^{*}(\lambda,n,$ $I_{0},k)/I_{0}$ \textless $1/2$.
    We assume that $I(\lambda,p,n,k)$ takes the maximum value between $I_{0}(\lambda,n,k)$ and $2(p+\mathcal{S}_{0}(n,k))$, denoted as $I(\lambda,p,n,k)$ $\gets$ $\mathsf{max}(I_{0}(\lambda,n,k),2(p+\mathcal{S}_{0}(n,k)))$. 
    For $I$ $\gets$ $I(\lambda,p,n,k)$, the $\mathcal{S}(\lambda,$ $p,n,k,I)$ is bounded as follows:
    \begin{align*}
        \mathcal{S}(\lambda,p,n,k,I) &= p + \mathcal{S}_{0}(n,k) +I \cdot  S_{\mathsf{AVer}}^{*}(\lambda,n,I,k)/I\\
        &\textless I/2 +I/2 =I.
    \end{align*}
    Clearly $p$ $+$ $\mathcal{S}_{0}(n,k)$ = $O(p)$. 
    Furthermore, we assume that for any sufficiently large $\lambda,n,I,k$, the inequality $S_{\mathsf{AVer}}^{*}(\lambda,n,I,k) \leq I^{1-\alpha}\beta(\lambda, $ $ n, k)$ holds.
    Due to the increasing $\beta$, for $I^{*}(\lambda,n,k)$ $\gets$ $(2 \cdot \beta(\lambda,n,k))^{1/\alpha}$, it holds that
    \begin{align*}
    S_{\mathsf{AVer}}^{*}(\lambda,n,I^{*}(\lambda,n,k),k)/I^{*}(\lambda,n,k) \\
    \textless \beta(\lambda,n,k) \cdot (2 \cdot \beta(\lambda,n,k))^{-1}
    =1/2
    \end{align*}
    where the parameters $\lambda,n,k$ are sufficiently large. 
    
    Hence for all $\lambda,n,k$ sufficiently large, two inequalities $I_{0}$ $\leq$ $I^{*}= (2 \cdot \beta(\lambda,n,k))^{1/\alpha}$ and $I(\lambda,p,n,k) \leq O(p+\beta(\lambda,n,k)^{1/\alpha})$ hold. The bound $I$ of accumulator \textbf{verifier} circuit size holds.
\end{proof}
    Additionally, recall the Lemma \ref{lemma1}, for simulator $\mathcal{S}$, Eq. (\ref{eq:4}) can be further expressed as follows:
    \begin{align*}
	\mathcal{S}(\lambda,p,n,k,I) &=  p + \mathcal{S}_{0}(n,k) + S_{\mathsf{AVer}}^{*}(\lambda,n,I,k)\\
     &= p + O(I^{1-\alpha}\beta(\lambda, n, k))\\
     &= p + O(p^{1-\alpha}\beta(\lambda, n, k)+\beta(\lambda, n, k)^{1/\alpha}).
    \end{align*}
    Particularly, if $p=cr_{kp}(\beta(\lambda,n,k)^{1/\alpha})$ then the size of index is $p+o(p)$, and so the stated efficiency bounds hold.

\begin{table}[]
\caption{The costs of the pairing-based polynomial commitment scheme.}
\begin{tabular}{cc}
\hline
\begin{tabular}[c]{@{}c@{}}Costs of Accumulator Scheme for \\ Pairing-based Polynomial Commitment\end{tabular} & Performance                               \\ \hline
\textit{Accumulator Generation}    & $\mathsf{poly}(\lambda)$        \\
\textit{Accumulator Key Generetion}     & $\mathsf{poly}(\lambda)$        \\
\textit{Verify evaluation proofs}         & $\Theta(n)$ pairings                      \\
\textit{Verify an accumulation step}         & $\Theta$ $\mathbb{G}_{1}$ multiplications \\
\textit{Verify final accumulator}         & 1 pairing                                 \\
\textit{Accumulator ($\mathtt{ACCU}$) size }            & 2$\mathbb{G}_{1}$                         \\ \hline
\end{tabular}
\label{tab6}
\end{table}

\textbf{Efficiency of Accumulator-based Polynomial Commitment Scheme.}
A2-DIDM adopts a polynomial commitment scheme based on the knowledge assumption of bilinear groups. 
Table \ref{tab6} shows the costs of the accumulator-based polynomial commitment scheme in the random oracle model, where $n$ represents the number of evaluation proofs. 
Both $\mathsf{ACS.Gen}$ and $\mathsf{ACS.KeyGen}$ take a polynomial time cost $\mathsf{poly}(\lambda)$, related to the security parameter. 
Due to $P$ elements in $\mathtt{CP}$ series for $S_{inp}$ within one accumulator, the accumulator \textbf{prover} performs $O(P)$ scalar multiplications in $\mathbb{G}_{1}$. 
Thus, $\mathsf{ACS.Prove}$ executes $O(1)$ linear combinations of $O(P)$ elements in $\mathbb{G}_{1}$. 
On the other hand, the accumulator \textbf{verifier} performs the same time cost as accumulator \textbf{prover}.
Additionally, for pairing-based polynomial commitment, the accumulator ($\mathtt{ACCU}$) involves two elements in $\mathbb{G}_{1}$, thus the accumulator size performs $2\mathbb{G}_{1}$. 
For \textbf{decider}, $\mathsf{ACS.Decide}$ performs one pairing so it takes $1$ pairing time cost to verify final accumulator. 
As shown in Table \ref{tab6}, the time complexity for verifying the correctness of an accumulation step is significantly lower compared to the time complexity of verifying the evaluation proofs, while the verification of the final accumulator proof only requires one pairing operation. 
By effectively deferring the verification work of the SNARK \textbf{verifier} to the A2-DIDM \textbf{verifier} through the accumulator scheme, we avoid adding costs as the number of proofs to be verified increases.





\section{Implementation and Evaluation} \label{sec:exp}


\subsection{Experiments for A2-DIDM Proving System}

\subsubsection{Experiment Configuration}
We implement the A2-DIDM proving system with an accumulator-based commitment scheme in Rust.
Our implementation builds on the \verb|bellman-bignat|\footnote{https://github.com/alex-ozdemir/bellman-bignat} and \verb|veri-zexe|\footnote{https://github.com/EspressoSystems/veri-zexe} libraries for algebraic and finite-field operations, polynomial computations, and blockchain-oriented primitives, and uses the pairing-friendly curve BLS12-377.
In addition, \verb|bellman-bignat| provides cryptographic primitives (e.g., Pedersen, SHA-256, and Poseidon) and hash-generic Merkle trees.
Based on this stack, we construct the zkSNARK circuits and constraints in A2-DIDM.
Specifically, we employ TURBOPLONK and ULTRAPLONK for internal and external circuits, respectively. For IR commitments, A2-DIDM utilizes Poseidon hashes and Merkle accumulators for parameter generation. 
Moreover, due to the lack of native support for BLS12-377 and PLONK on mainstream public blockchains (e.g. ETH), we implement the blockchain ledger-side validation logic using \verb|veri-zexe| rather than direct contract deployment. 
Our implementation covers transaction generation, commitment construction, and transaction cryptographic checks (e.g., Merkle-root consistency and SNARK verification). All experiments are conducted locally, with results averaged over 10 times per setting.

In our evaluation, IRs are used as inputs to transactions, and we set the number of IRs to $Num \in \{2, 3,4,5,6,7,8,9,10\}$.
To emulate model-weight training, we vary the weight-vector size as $Size \in \{4,8,16,32,64,128,256\}$ and set the Merkle tree height to $h \in \{3,4,5\}$.
We observe that when $Size \in \{64,128,256\}$, the setup phase of the Pedersen-hash-based polynomial commitment takes more than 5 hours.
Given this prohibitive cost, we adjust the experimental parameters to keep the evaluation tractable.
For A2-DIDM, we fix the Merkle tree height to $h=6$ and measure runtime under $Size \in \{4,8,16,32,64,128,256\}$.
For the \textit{Pedersen commitment} baseline, we instead evaluate smaller Merkle trees by varying the height from $h=3$ to $h=5$ with $Size \in \{8,16,32,64\}$.

\begin{table*}[!t]
\caption{Experiment evaluation of transaction proof for A2-DIDM proving system}
\begin{tabular}{ccccccccccc}
\hline
\multicolumn{2}{c}{\multirow{2}{*}{A2-DIDM performance for transaction}}                                                           & \multicolumn{9}{c}{The number of IRs}                                                    \\ \cline{3-11} 
\multicolumn{2}{c}{}                                                                                                               & $Num$=2 & $Num$=3 & $Num$=4 & $Num$=5 & $Num$=6 & $Num$=7 & $Num$=8 & $Num$=9 & $Num$=10 \\ \hline
\multirow{2}{*}{\textit{\begin{tabular}[c]{@{}c@{}}Proving \\ System Setup\end{tabular}}}  & \textit{Parameter Generation(s)}      & 143.34  & 257.25  & 267.80  & 295.41  & 529.35  & 600.40  & 713.07  & 931.41  & 1373.07  \\  & \textit{Key Addresses Generation(s)}  & 0.02    & 0.02    & 0.02    & 0.02    & 0.02    & 0.02    & 0.02    & 0.02    & 0.02     \\ \hline
\multirow{3}{*}{\textit{\begin{tabular}[c]{@{}c@{}}Transaction\\ Generation\end{tabular}}} & \textit{Internal Circuit Constraints} & 53754   & 71535   & 88924   & 106313  & 124094  & 141483  & 158872  & 176653  & 194042   \\  & \textit{External Circuit Constraints} & 87176   & 126099  & 141492  & 180544  & 219468  & 234860  & 273912  & 312813  & 328205   \\  & \textit{Transaction Execute(s)}       & 196.75  & 382.14  & 399.21  & 412.16  & 763.30  & 816.59  & 844.63  & 866.57  & 894.66   \\ \hline
\textit{Verification}                                                                      & \textit{Merkle Root Verification(s)}  & 0.21    & 0.22    & 0.24    & 0.22    & 0.25    & 0.23    & 0.23    & 0.22    & 0.22     \\ \hline
\end{tabular}
\label{tab:tx}
\end{table*}

\begin{table}[!t]
\caption{Single-threaded experiment evaluation of IR commitment.
}
\begin{tabular}{ccccc}
\hline
\multicolumn{2}{c}{\multirow{2}{*}{\begin{tabular}[c]{@{}c@{}}A2-DIDM performance \\ for IR commitment\end{tabular}}}   & \multicolumn{3}{c}{Weight vector size}                      \\ \cline{3-5} 
\multicolumn{2}{c}{}                                                                                                            & \multicolumn{1}{c}{$Size$=64}     & \multicolumn{1}{c}{$Size$=128}     & $Size$=256     \\ \hline
\multirow{3}{*}{\textit{\begin{tabular}[c]{@{}c@{}}Polynomial \\ Commitment \\ Setup\end{tabular}}} & \textit{Poly. Opera.(s)}  & \multicolumn{1}{c}{192.81} & \multicolumn{1}{c}{384.194} & 766.434 \\      & \textit{Honest Query (s)} & \multicolumn{1}{c}{38.378} & \multicolumn{1}{c}{71.234}  & 141.703 \\ & \textit{Para. Gen. (s)}   & \multicolumn{1}{c}{1.351}  & \multicolumn{1}{c}{2.695}   & 5.263   \\ \hline
\multirow{4}{*}{\textit{\begin{tabular}[c]{@{}c@{}}Proof \\ Generation\end{tabular}}}               & \textit{Witness size}     & \multicolumn{1}{c}{641}    & \multicolumn{1}{c}{1281}    & 2561    \\  & \textit{Synthesize (s)}   & \multicolumn{1}{c}{4.994}  & \multicolumn{1}{c}{10.035}  & 19.814  \\      & \textbf{Prover (s)}       & \multicolumn{1}{c}{28.902} & \multicolumn{1}{c}{55.245}  & 109.399 \\ \hline
\textit{\begin{tabular}[c]{@{}c@{}}Proof \\ Validation\end{tabular}}                                & \textbf{Verifier (s)}     & \multicolumn{1}{c}{0.003}  & \multicolumn{1}{c}{0.003}   & 0.003   \\ \hline
\end{tabular}
\label{tab1}
\end{table}

\begin{table*}[]
\caption{Pedersen commitment scheme versus A2-DIDM commitment scheme.
}
\setlength{\tabcolsep}{1.8mm}{
\begin{tabular}{ccccc|ccc|ccc|ccc}
\hline
\multirow{2}{*}{Time Costs}                                                               & \multirow{2}{*}{\begin{tabular}[c]{@{}c@{}}Commitment \\ Scheme\end{tabular}} & \multicolumn{3}{c|}{Size=8} & \multicolumn{3}{c|}{Size=16} & \multicolumn{3}{c|}{Size=32} & \multicolumn{3}{c}{Size=64} \\ \cline{3-14} 
     &                                                                               & h=3     & h=4     & h=5     & h=3      & h=4     & h=5     & h=3      & h=4     & h=5     & h=3     & h=4     & h=5     \\ \hline
\multirow{2}{*}{Init time (s)}                                                            & Pedersen                                                                      & 0.013   & 0.019   & 0.021   & 0.017    & 0.025   & 0.018   & 0.033    & 0.027   & 0.029   & 0.043   & 0.047   & 0.050   \\ & Our scheme                                                                    & 0.018   & 0.018   & 0.020   & 0.031    & 0.027   & 0.039   & 0.039    & 0.054   & 0.051   & 0.069   & 0.081   & 0.102   \\ \hline
\multirow{2}{*}{\begin{tabular}[c]{@{}c@{}}Parameter \\ generation time (s)\end{tabular}} & Pedersen                                                                      & 1.061   & 1.483   & 1.634   & 2.112    & 2.979   & 3.228   & 4.191    & 5.882   & 6.437   & 8.361   & 11.868  & 13.028  \\    & Our scheme                                                                    & 0.132   & 0.153   & 0.177   & 0.258    & 0.304   & 0.347   & 0.515    & 0.620   & 0.699   & 1.018   & 1.198   & 1.373   \\ \hline
\multirow{2}{*}{Prover time (s)}                                                          & Pedersen                                                                      & 11.877  & 18.320  & 19.299  & 22.988   & 36.389  & 38.015  & 44.301   & 68.592  & 71.671  & 82.904  & 131.321 & 138.173 \\ & Our scheme                                                                    & 3.370   & 3.901   & 4.466   & 6.225    & 7.272   & 8.343   & 11.751   & 13.847  & 15.864  & 21.926  & 25.817  & 29.601  \\ \hline
\multirow{2}{*}{Verifier time (s)}                                                        & Pedersen                                                                      & 0.003   & 0.003   & 0.003   & 0.003    & 0.003   & 0.003   & 0.003    & 0.003   & 0.003   & 0.003   & 0.003   & 0.003   \\ & Our scheme                                                                    & 0.003   & 0.003   & 0.003   & 0.003    & 0.003   & 0.003   & 0.003    & 0.003   & 0.003   & 0.003   & 0.003   & 0.003   \\ \hline
\multirow{2}{*}{Synthesize time (s)}                                                      & Pedersen                                                                      & 0.318   & 0.381   & 0.441   & 0.633    & 0.758   & 0.861   & 1.271    & 1.492   & 1.730   & 2.537   & 2.9813  & 3.437   \\ & Our Scheme                                                                    & 0.438   & 0.533   & 0.652   & 0.863    & 1.063   & 1.323   & 1.718    & 2.149   & 2.587   & 3.401   & 4.295   & 5.150   \\ \hline
\end{tabular}
}
\label{tab2}
\end{table*}

\subsubsection{Experiment Results}

We evaluate the performance of the A2-DIDM proving system from two aspects: \textit{Transaction Proof} and \textit{IR Commitment}. 
On one hand, \textit{Transaction Proof} involves three main phases of A2-DIDM: \textit{Proving System Setup}, \textit{Transaction Generation} and \textit{Verification}. 
On the other hand, \textit{IR Commitment} of A2-DIDM is implemented to run in a single-threaded manner, with a focus on the time costs associated with the three main phases: \textit{Polynomial Commitment Setup}, \textit{Proof Generation}, and \textit{Proof Validation}.
In particular, we pay close attention to the time expenses incurred during the execution of these phases, as they directly impact the usability and practicality of the overall verifiable computation model. 

\textbf{Experiment Results of Accumulator-based zkSNARK.}
Table \ref{tab:tx} presents the specific time costs of \textit{Transaction Proof} for different $Num$-input transaction dimensions in A2-DIDM.
For \textit{Proving System Setup}, the model identity audit issue is transformed into a constraint system via the constructions of IR and predicates. 
\textit{Parameter Generation} involves the generation of security parameters.
\textit{Key addresses Generation} denotes the time costs to generate the public key address ($\mathtt{IRPK}$).
In addition, we evaluate the performance of the A2-DIDM proving system in the \textit{Transaction Generation} phase from three dimensions: \textit{Internal Circuit Constraints}, \textit{External Circuit Constraints}, and \textit{Transaction Proof Execution time}.
As the number of input IRs changes from $Num$=2 to $Num$=10, both the constraints on internal and external circuits, as well as the execution time of transaction proof, show a significant increase.
For \textit{Verification}, the verification time cost for Merkle tree root remains stable between 0.22s and 0.25s, achieving a lightweight verification cost.

Table \ref{tab1} would contain the time costs for different configurations of $h$ and $Size$ in IR commitment.
\textit{Poly. Opera.} and \textit{Para. Gen.} denotes the \textit{Polynomial Operation time} and \textit{Parameter Generation time}, respectively. The size of the witness includes the number of original weight vector proofs, the number of weight vector proofs, and an additional 1 (to ensure that the witness has at least one element). 
Particularly, \textit{Polynomial Operations} refer to various mathematical operations performed on polynomials within the polynomial domain to express constraints and relationships.
\textit{Honest Query} typically denotes the interaction between an honest verifier and the prover. During this interaction, the verifier sends challenges or queries to the prover to test whether the prover can correctly generate the corresponding responses. 
\textit{Parameter Generation} denotes the time cost for $\mathtt{pp_{C}}$ generation.
For \textit{Proof Generation}, we remark that the inclusion of multiple IR updates and additions would result in increased time and resource costs for IR commitment \textbf{prover}. 
Additionally, we synthesize the circuit into a constrained system and evaluate the synthesis time. 
Finally, for \textit{Proof Validation}, we evaluate the \textbf{verifier} time costs and A2-DIDM performes well on \textit{Proof Validation}. The constant validation cost demonstrates the stability and robustness of our model.






\textbf{Experiment Results of Commitment Comparison.}
We compared A2-DIDM commitment scheme to \textit{Pedersen commitment} and summarize the experiment results in Table \ref{tab2}. 
The \textit{Prover time} in commitment is primarily determined by the execution of multiple multi-exponentiations and Fast Fourier Transforms (FFTs), with the size of FFTs being linearly proportional to constraints.
Our A2-DIDM commitment scheme demonstrates significantly higher efficiency in \textbf{prover} compared to \textit{Pedersen commitment}. 
By utilizing A2-DIDM with an accumulator-based prover and Poseidon-128 hash function, the efficiency of the \textbf{prover} has an  enhancement of approximately 80\% in comparison to employing a Pedersen hash function with the same tree height $h$.
This improvement is attributed to the difference in constraint usage between the Poseidon-128 and Pedersen hash functions within the SNARK construction.
Specifically, for each hash function call inside the SNARK, the Poseidon-128 hash function only requires 316 R1CS constraints, while the Pedersen hash function requires a significantly higher number of constraints, specifically 2753 constraints per call. 
Consequently, when generating commitment proofs for a tree of the same height ($h$), A2-DIDM with the Poseidon-128 hash function achieves higher efficiency as it involves fewer constraints and computations.

\subsection{Experiments for A2-DIDM Predicates}

\subsubsection{Dataset and Models}


Our experiments are conducted on four benchmark datasets spanning three modalities: CIFAR10 and CIFAR100 for image classification, Speech Commands for audio classification, and Yahoo Answers for text classification. 
For the image classification tasks, we evaluate four typical architectures, namely LeNet, ResNet-18, AlexNet, and VGG-16, on CIFAR10 and CIFAR100. 
For the audio classification task, we adopt the medium Depthwise Separable CNN (DSCNN) on Speech Commands. 
For the text classification task, we use the Deep Pyramid CNN (DPCNN) on Yahoo Answers.
To evaluate the impact of A2-DIDM on model accuracy, we have conducted experiments using ResNet20, ResNet32, and ResNet44 on CIFAR10, and ResNet18, ResNet34, and ResNet50 on CIFAR100.
All experiments are carried out on a system equipped with an Intel(R) Xeon(R) Platinum 8352V CPU @ 2.10GHz and an RTX 5090 GPU.
We remark that our approach does not modify the model architecture or parameters; it only records intermediate variables during the training process.


\subsubsection{Adversarial Attack Settings}
Assuming $\mathcal{A}$ has white-box access to the structure and weights of $f_{W_{P}}$, and may also possess a smaller auxiliary dataset ($\mathbf{D}_\mathcal{A}$), which are split into an unlabeled subset ($\mathbf{D}_\mathcal{A}^{u}$) and a labeled subset ($\mathbf{D}_\mathcal{A}^{l}$).
To evaluate the predicate robustness against forgery attempts, we implement three adversarial attacks for three threats (\textbf{M1}--\textbf{M3}).

\textbf{Reverse Construction Attack (RCA).} $\mathcal{A}$ starts from the converged victim DNN ($f_{W_{P}}$) and fine-tunes it on an auxiliary dataset ($\mathbf{D}_{\mathcal{A}}$) while gradually increasing the label poisoning rate, thereby intentionally degrading model utility over time. 
    In addition, a parameter regularizer is introduced to pull the model weights toward a randomly initialized model ($\widetilde{W_0}$). Specifically, at $i$-th epoch, $\mathcal{A}$ minimizes
    $$
    \mathcal{L}' = \mathcal{L}_{\mathtt{CE}} + \beta \cdot \mathsf{DL}(\widetilde{W_i},\widetilde{W_0}),
    $$
    where $\mathcal{L}_{\mathtt{CE}}$ is the cross-entropy loss computed on the (partially poisoned) labeled auxiliary subset, and $\beta$ is the regularization coefficient. 

\textbf{Checkpoint Forgery Attack (CFA).} $\mathcal{A}$ fabricates a sequence of pseudo checkpoints by linearly combining the weights of a randomly initialized model and the final converged model (\(f_{W_{P}}\)). 
    Specifically, the parameters at step \(t\) are constructed as 
    \[
    \widetilde{W_i} = (1-\mu) \cdot \widetilde{W_0} + \mu \cdot W_P,
    \quad i \in [0, P],
    \]
    where \(\widetilde{W_0}\) denotes the randomly initialized weights of the $\mathcal{A}$, \(W_P\) represents the converged weights of the victim model \(f_{W_{P}}\), and \(\mu \in [0,1]\) is the interpolation coefficient \cite{liu2023provenance}. 

\textbf{ Model Distillation Attack (MDA). }
    Since \(\mathcal{A}\) does not have access to the labels of the original dataset, the knowledge distillation loss in MDA is defined as the KL divergence between the teacher model and the student model. 
    To steal the victim model ($f_{W_{P}}$) without access to its training data, \(\mathcal{A}\) trains a student model ($\widetilde{f_{W_{P}}}$) on an auxiliary dataset ($\mathbf{D}_\mathcal{A}$), which is sharing the same architecture as the victim model. 
    The student model ($\widetilde{f_{W_{P}}}$) is optimized using knowledge distillation on unlabeled samples and an additional supervised cross-entropy term on labeled samples:
    \[
    \mathcal{L}
    = \lambda_{\mathtt{KD}}\sum_{\mathclap{x\in \mathbf{D}_\mathcal{A}^{u}}}\mathcal{L}_{\mathtt{KD}}(x)
    + \sum_{\mathclap{(x,y)\in \mathbf{D}_\mathcal{A}^{l}}}
    \Big(\lambda_{\mathtt{KD}}\mathcal{L}_{\mathtt{KD}}(x)+\lambda_{\mathtt{CE}}\mathcal{L}_{\mathtt{CE}}(x,y)\Big),
    \]
    where $\lambda_{\mathtt{KD}}$ is a KL-based distillation loss, and $\lambda_\mathtt{CE}$ denotes cross-entropy.

\subsubsection{Experiment Results}
We implement the model-training part of A2-DIDM on multiple model architectures and launch RCA, CFA and MDA to assess the effectiveness and robustness of three predicates. 

\begin{table}[!t]
\caption{The comparison of $\Phi_{\mathtt{IWFW}}$, $\Phi_{\mathtt{CWCD}}$ and $\Phi_{\mathtt{MWCD}}$ after launching CFA, RCA, and MDA under different models. $\dagger$ denotes that the corresponding predicate detects the attack. For CFA and RCA, coefficients are set as $\beta = \mu = 0.05$; for MDA, regularization coefficient is $0.005$.}
\centering
\setlength{\tabcolsep}{4.0pt}
\begin{tabular}{ccccccc}
\hline
Dataset & Model & Attack & $\Phi_{\mathtt{CWCD}}$ & $\Phi_{\mathtt{IWFW}}$(1) & $\Phi_{\mathtt{IWFW}}$(2) & $\Phi_{\mathtt{MWCD}}$ \\ \hline

\multirow{16}{*}{CIFAR10}
& \multirow{4}{*}{LeNet}
& Clean & 0.121 & 0.009 & 0.262 & 0.167 \\
& & RCA   & 0.018 & 0.118$^\dagger$ & 0.495$^\dagger$ & 0.042 \\
& & CFA   & 0.125$^\dagger$ & 0.014$^\dagger$ & 0.274$^\dagger$ & 0.003 \\
& & MDA   & 0.065 & 0.016$^\dagger$ & 0.302$^\dagger$ & 0.101 \\ \cline{2-7}

& \multirow{4}{*}{ResNet18}
& Clean & 0.035 & 0.009 & 0.143 & 0.200 \\
& & RCA   & 0.003 & 0.211$^\dagger$ & 0.222$^\dagger$ & 0.015 \\
& & CFA   & 0.035$^\dagger$ & 0.015$^\dagger$ & 0.136 & 0.004 \\
& & MDA   & 0.035 &  0.012$^\dagger$ & 0.133 & 0.109 \\ \cline{2-7}

& \multirow{4}{*}{AlexNet}
& Clean & 0.028 & 0.012 & 0.128 & 0.033 \\
& & RCA   & 0.004 & 0.135$^\dagger$ & 0.146$^\dagger$ & 0.048 \\
& & CFA   & 0.032$^\dagger$ & 0.007 & 0.123 & 0.004 \\
& & MDA   & 0.052 & 0.013 & 0.131$^\dagger$ & 0.117$^\dagger$ \\ \cline{2-7}

& \multirow{4}{*}{VGG16}
& Clean & 0.025 & 0.009 & 0.137 & 0.030 \\
& & RCA   & 0.003 & 0.174$^\dagger$ & 0.142$^\dagger$ & 0.021 \\
& & CFA   & 0.028$^\dagger$ & 0.009 & 0.133 & 0.004 \\
& & MDA   & 0.027 & 0.005 & 0.133 & 0.095$^\dagger$ \\ \hline

\multirow{16}{*}{CIFAR100}
& \multirow{4}{*}{LeNet}
& Clean & 0.148 & 0.015 & 0.256 & 0.166 \\
& & RCA   & 0.021 & 0.159$^\dagger$ & 0.387$^\dagger$ &  0.054 \\
& & CFA   & 0.152$^\dagger$ & 0.009 & 0.249 &  0.005 \\
& & MDA   & 0.101 & 0.008 & 0.296$^\dagger$ & 0.100 \\ \cline{2-7}

& \multirow{4}{*}{ResNet18}
& Clean & 0.035 & 0.008 & 0.094 & 0.185 \\
& & RCA   & 0.003 & 0.217$^\dagger$ & 0.217$^\dagger$ & 0.010 \\
& & CFA   & 0.036$^\dagger$ & 0.008 & 0.099$^\dagger$ & 0.004 \\
& & MDA   & 0.035 & 0.012$^\dagger$ & 0.089 & 0.095 \\ \cline{2-7}

& \multirow{4}{*}{AlexNet}
& Clean & 0.029 & 0.011 & 0.094 & 0.034 \\
& & RCA   & 0.006 & 0.097$^\dagger$ & 0.145$^\dagger$ & 0.035 \\
& & CFA   & 0.033$^\dagger$ & 0.012 & 0.094 & 0.003 \\
& & MDA   & 0.050 & 0.009 & 0.089 & 0.120$^\dagger$ \\ \cline{2-7}

& \multirow{4}{*}{VGG16}
& Clean & 0.029 & 0.012 & 0.094 & 0.148 \\
& & RCA   & 0.003 & 0.184$^\dagger$ & 0.221$^\dagger$ & 0.018 \\
& & CFA   & 0.030$^\dagger$ &  0.010 & 0.089 & 0.004 \\
& & MDA   & 0.028 & 0.007 & 0.096$^\dagger$ & 0.133 \\ \hline

\multirow{4}{*}{\begin{tabular}[c]{@{}c@{}}Speech\\Commands\end{tabular}}
& \multirow{4}{*}{DSCNN}
& Clean & 0.233  & 0.030 & 0.198 & 0.078 \\
& & RCA   & 0.040 & 0.339$^\dagger$ & 0.492$^\dagger$ &0.044\\
& & CFA   &0.275$^\dagger$ & 0.044$^\dagger$ & 0.201$^\dagger$ & 0.003 \\
& & MDA   & 0.220  & 0.051$^\dagger$ & 0.203$^\dagger$ & 0.151$^\dagger$ \\ \hline

\multirow{4}{*}{\begin{tabular}[c]{@{}c@{}}Yahoo\\Answers\end{tabular}}
& \multirow{4}{*}{DPCNN}
& Clean & 0.026 & 0.002 & 0.149 &  0.024 \\
& & RCA   & 0.020  & 0.008$^\dagger$  & 0.146 & 0.008 \\
& & CFA   &0.159$^\dagger$ & 0.002 & 0.144  & 0.011 \\
& & MDA & 0.204  & 0.003 & 0.148 & 0.051$^\dagger$ \\
\hline

\end{tabular}
\label{tab:cfa_rca_mda}
\end{table}

\textbf{Experiment Results of RCA, CFA and MDA.}
To demonstrate the robustness of the three predicates, we conduct extensive experiments on a diverse set of tasks: four image classification models (LeNet, ResNet18, AlexNet, VGG16) evaluated on CIFAR10 and CIFAR100, an audio classification model (medium DSCNN) on Speech Commands, and a text classification model (DPCNN) on Yahoo Answers.
For each task, we compare the predicates ($\Phi_{\mathtt{IWFW}}$, $\Phi_{\mathtt{CWCD}}$ and $\Phi_{\mathtt{MWCD}}$) of clean checkpoints in A2-DIDM with those of forged checkpoints generated by RCA, CFA and MDA.
Concretely, $\Phi_{\mathtt{IWFW}}$ is decomposed into two sub-metrics: (i) $\Phi_{\mathtt{IWFW}}(1)$/$\Phi_{\mathtt{IWFW}}(\mathtt{Init \text{ vs } GMM})$, measured as the distance between the empirical distribution of the initial parameters and the GMM distribution; and (ii) $\Phi_{\mathtt{IWFW}}(2)$/$\Phi_{\mathtt{IWFW}}(\mathtt{PCA})$, quantified as the maximum explained variance ratio obtained by PCA within each parameter group of the initial model. 
The $\Phi_{\mathtt{IWFW}}$ is satisfied only when both the $\Phi_{\mathtt{IWFW}}(1)$ and $\Phi_{\mathtt{IWFW}}(2)$ meet their respective thresholds. 
$\Phi_{\mathtt{CWCD}}$ is computed as the weight distance between the initial and final weight checkpoints.
Specifically, $\Phi_{\mathtt{CWCD}}$ is satisfied only if
$\Phi_{\mathtt{CWCD}} < $ $ \mathtt{DisMean}- 5 \times \mathtt{DisStd}$ ($\epsilon=5$).
The $\Phi_{\mathtt{MWCD}}$ is quantified as the maximum value of distribution distance among the adjacent checkpoint pairs.
In addition, for both RCA and MDA, we assume that the auxiliary dataset ($\mathbf{D}_{\mathcal{A}}$) is drawn from the same underlying distribution as the private dataset ($\mathbf{D}_{\mathcal{P}}$), with $|\mathbf{D}_{\mathcal{A}}| = 0.2\,|\mathbf{D}_{\mathcal{P}}|$. 
Moreover, we assume that $50\%$ of $\mathbf{D}_{\mathcal{A}}$ is labeled; i.e., the labeled subset ($\mathbf{D}_{\mathcal{A}}^{l}$) satisfies $|\mathbf{D}_{\mathcal{A}}^{l}| = 0.5\,|\mathbf{D}_{\mathcal{A}}|$.

Table~\ref{tab:cfa_rca_mda} shows the values of $\Phi_{\mathtt{IWFW}}$, $\Phi_{\mathtt{CWCD}}$ and $\Phi_{\mathtt{MWCD}}$ under three adversarial attacks.
RCA attempts to reconstruct a plausible training history by reversing a deliberately degraded fine-tuning process and regularizing weights toward a random model. 
Across four datasets, $\Phi_{\mathtt{IWFW}}$ consistently identifies RCA, indicating that the forged initial checkpoint violates the expected randomness and independence properties of genuine initialization. 
In contrast, $\Phi_{\mathtt{CWCD}}$ is largely insensitive to RCA, as the optimization still follows a continuous SGD evolution without producing abnormal distance peaks. 
Moreover, CFA fabricates intermediate checkpoints via linear interpolation between a random initialization and the converged victim weights, bypassing real training dynamics. 
In this attack, $\Phi_{\mathtt{CWCD}}$ provides the strongest and most stable detection, reflecting that the forged checkpoints do not satisfy the contraction behavior expected from SGD convergence.
MDA trains a student model to mimic the victim using distillation objectives on auxiliary dataset, resulting in checkpoints that differs from the victim’s true optimization path. 
Thus, $\Phi_{\mathtt{MWCD}}$ maintains highly effective against MDA, since its produced weight checkpoints fail to exhibit the expected smooth parameter updates.
$\Phi_{\mathtt{IWFW}}$ is less reliable for MDA, since distillation typically begins with a valid random initialization and may partially satisfy initialization-based constraints.
As shown in Table~\ref{tab:cfa_rca_mda}, none of the three adversarial attacks (RCA, CFA, and MDA) can satisfy all three predicates simultaneously. 
These experiment results indicate that the predicates ($\Phi_{\mathtt{IWFW}}$, $\Phi_{\mathtt{CWCD}}$ and $\Phi_{\mathtt{MWCD}}$) in A2-DIDM effectively capture the training integrity of DIDM and are robust against these adversarial attacks.


\begin{figure*}[!t]
    \centering
    \subfloat{
        \centering
	\includegraphics[width=0.5\linewidth]{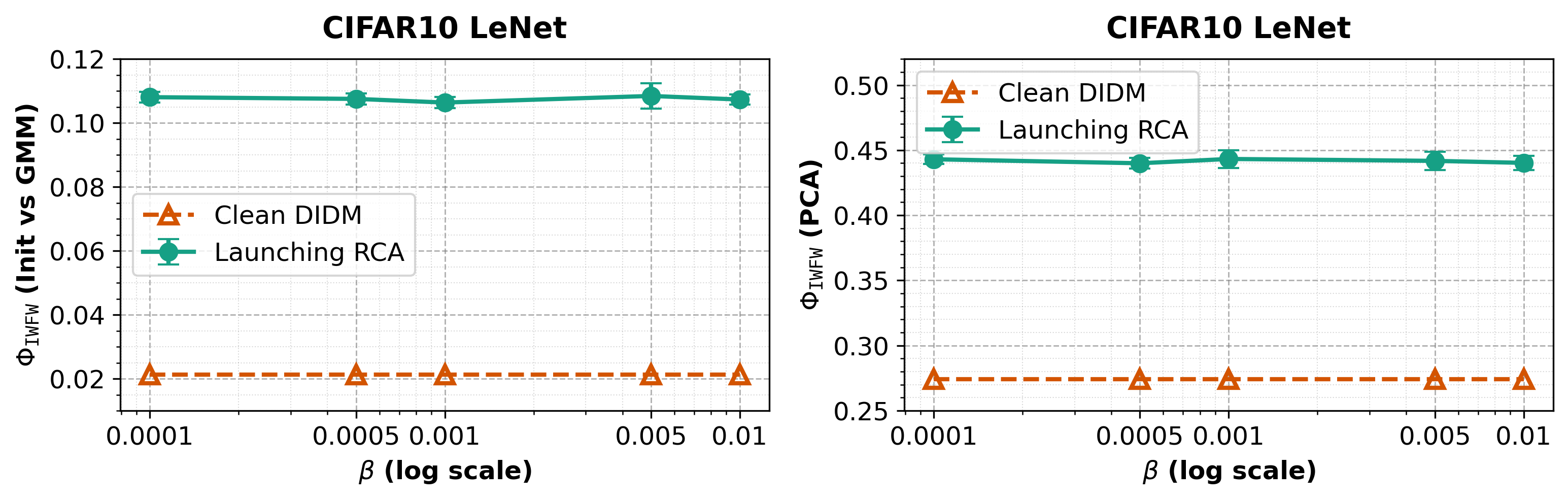}
        \label{fig:p6-le}
    }
    \subfloat{
        \centering
	\includegraphics[width=0.5\linewidth]{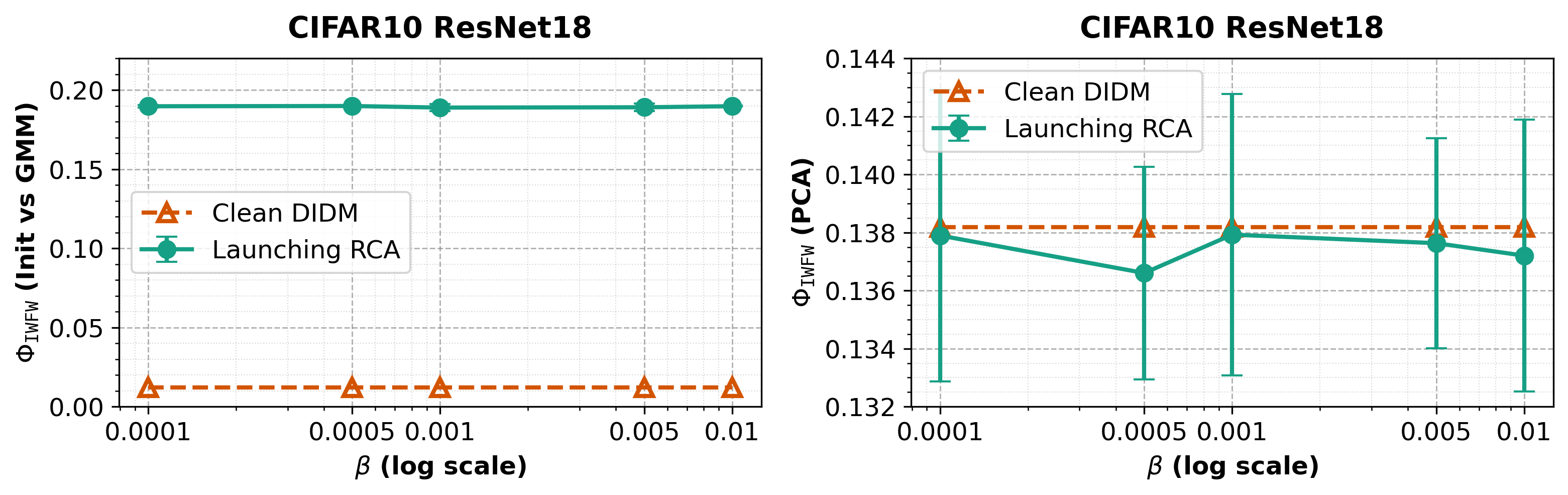}
        \label{fig:p6-res}
    }

    \subfloat{
        \centering
	\includegraphics[width=0.5\linewidth]{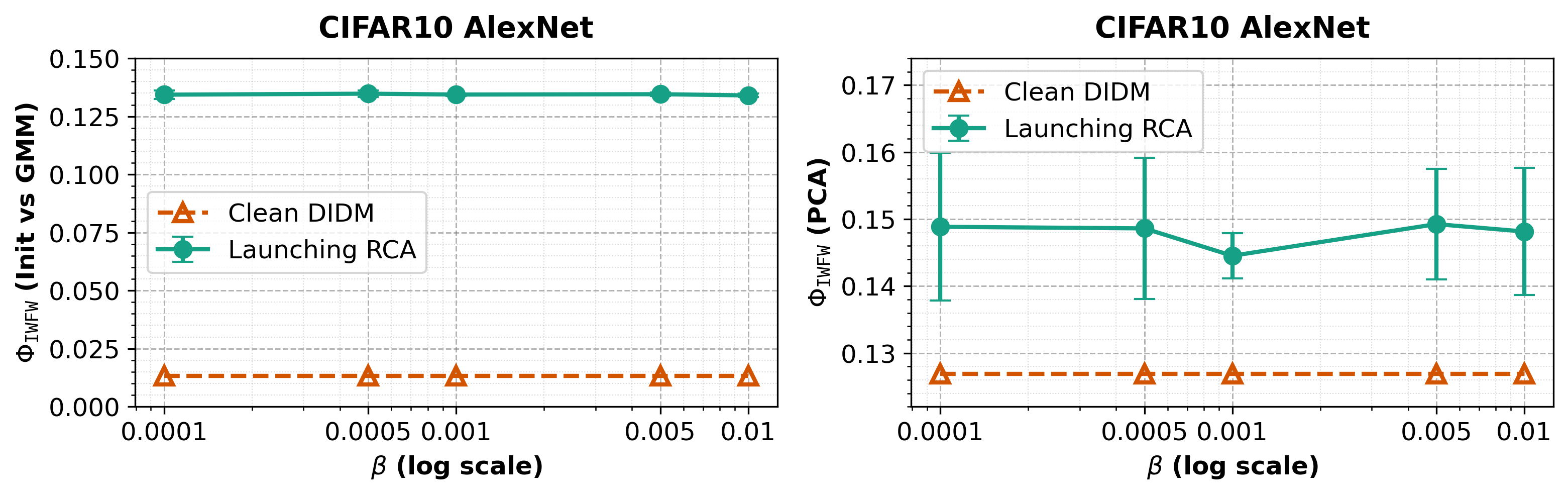}
        \label{fig:p6-alex}
    }
        \subfloat{
        \centering
	\includegraphics[width=0.5\linewidth]{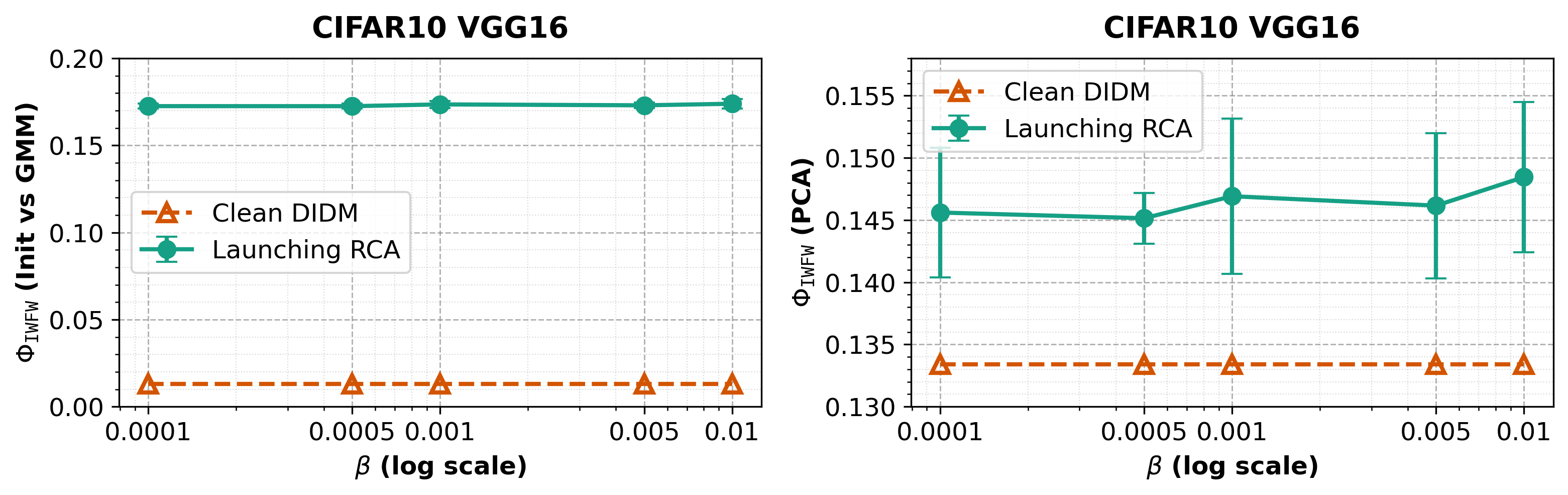}
        \label{fig:p6-vgg}
    }
    
    \subfloat{
        \centering
	\includegraphics[width=0.5\linewidth]{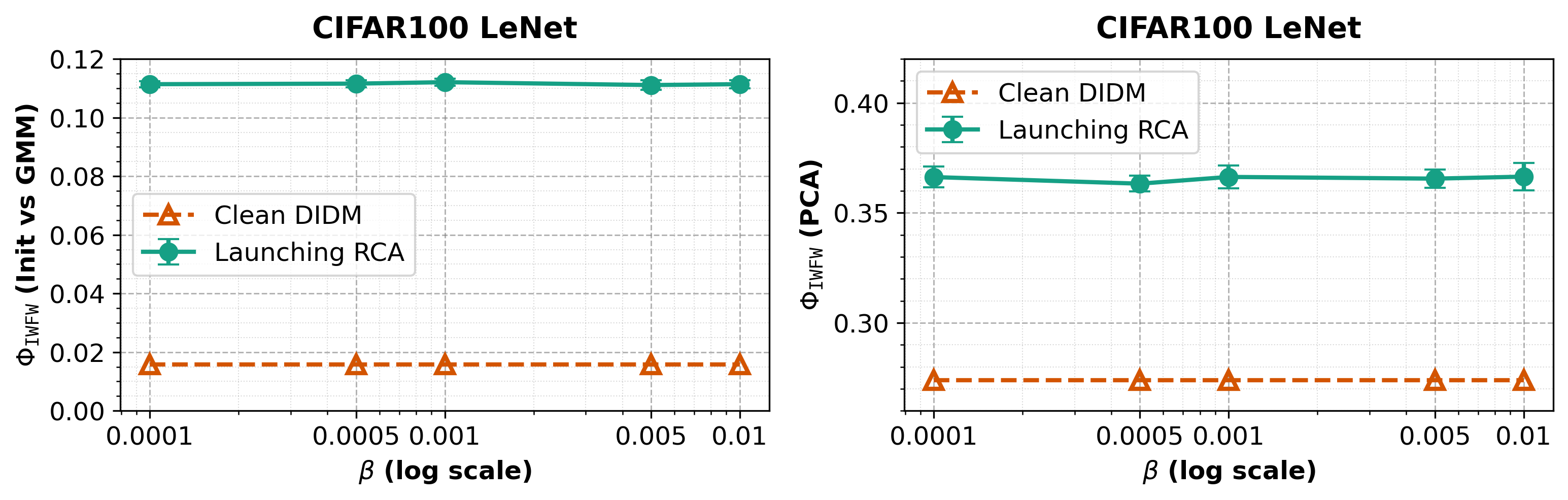}
        \label{fig:c100-alex}
    }
        \subfloat{
        \centering
	\includegraphics[width=0.5\linewidth]{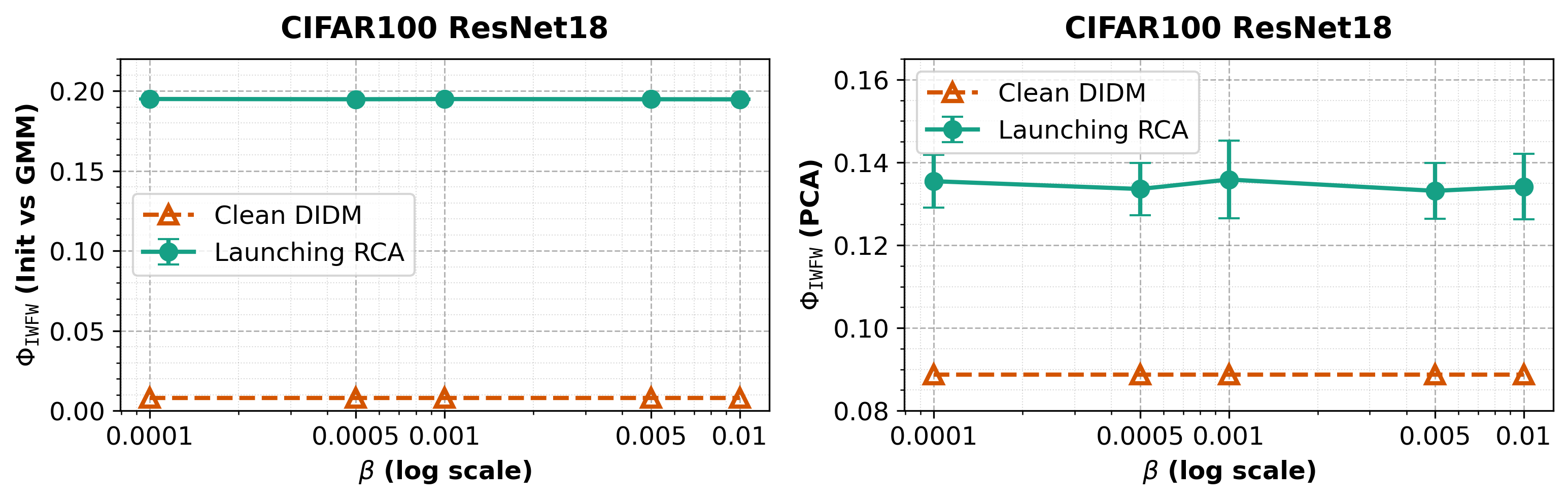}
        \label{fig:c100-vgg}
    }
    
    \subfloat{
   \includegraphics[width=0.5\linewidth]{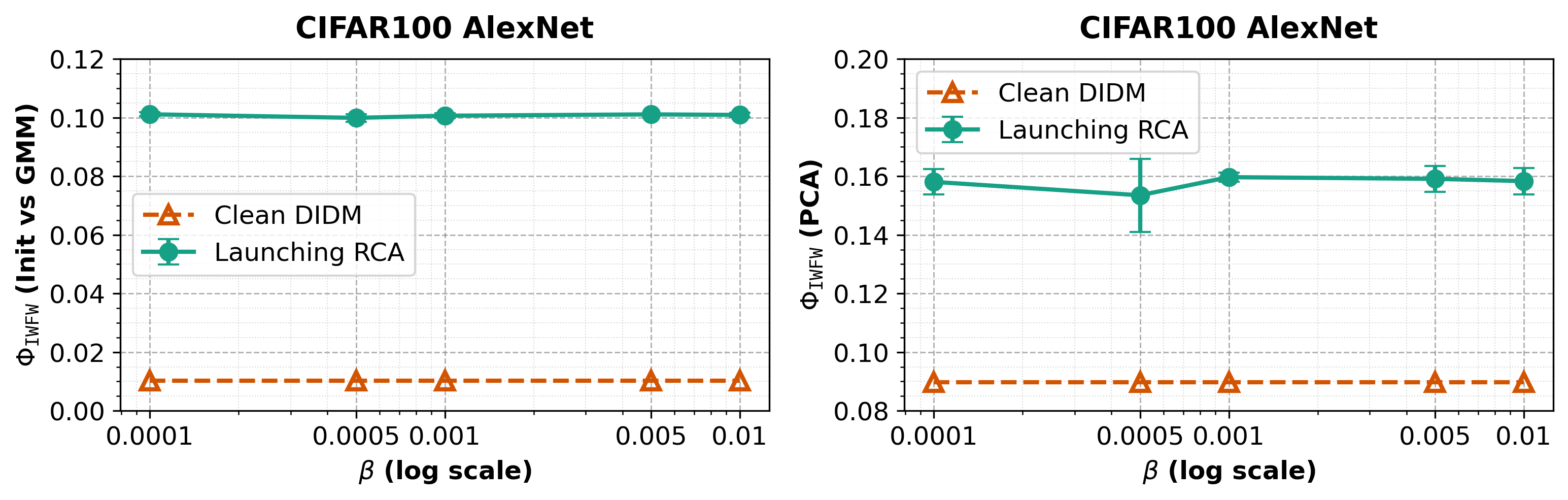}
        \label{fig:c100-alex}
    }
        \subfloat{
  \includegraphics[width=0.5\linewidth]{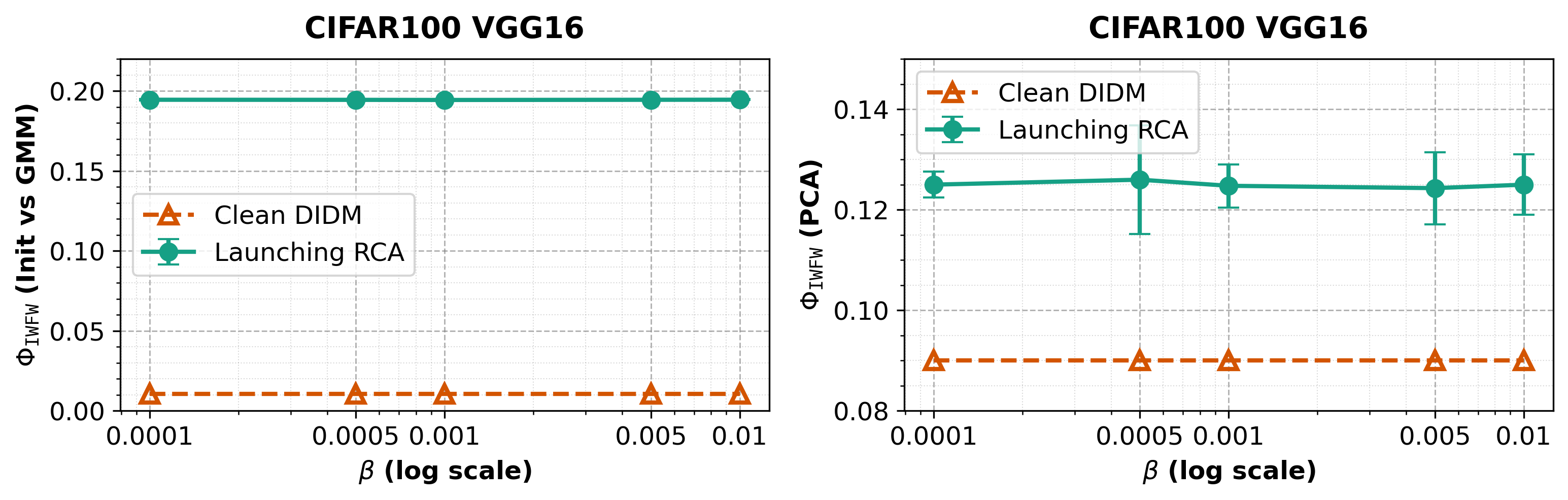}
        \label{fig:c100-vgg}
    }

\caption{The comparison of $\Phi_{\mathtt{IWFW}}(\mathtt{Init \text{ vs } GMM})$ and $\Phi_{\mathtt{IWFW}}(\mathtt{PCA})$ after launching RCA with different values of $\beta$. “Clean DIDM” denotes the original and unattacked weight checkpoint.
}
\label{figp6}
\label{fig:predicate}
\end{figure*}

Moreover, to further demonstrate the robustness of $\Phi_{\mathtt{IWFW}}$, we evaluate how launching RCA with different regularization coefficients ($\beta$) affects the initial distribution distance and PCA. 
Fig. \ref{fig:predicate} shows the values of $\Phi_{\mathtt{IWFW}}(\mathtt{Init\text{ vs }GMM})$ and $\Phi_{\mathtt{IWFW}}(\mathtt{PCA})$ after launching RCA under different regularization coefficients $\beta$ across CIFAR10/100 and four architectures. 
For each regularization coefficients ($\beta$), we conduct 5 independent RCA runs. 
RCA starts from a converged model and reversely constructs the checkpoint sequence, which allows it to bypass $\Phi_{\mathtt{CWCD}}$ detection. 
However, the forged initial weights still fail to match the distributional consistency and parameter independence expected from a genuine random initialization. 
Moreover, as shown in Fig. \ref{fig:predicate}, under different values of $\beta$, $\Phi_{\mathtt{IWFW}}$ after launching RCA remains significantly higher than that of clean DIDM. This indicates that the adversary can hardly evade the initialization verification enforced by $\Phi_{\mathtt{IWFW}}$ via tuning the regularization strength.
An additional observation is that ResNet18, equipped with residual connections and batch normalization, can quickly learn strong feature reuse on CIFAR10, making $\Phi_{\mathtt{IWFW}}(\mathtt{PCA})$ less discriminative in distinguishing clean DIDM from RCA-forged checkpoint chains. 
Nevertheless, $\Phi_{\mathtt{IWFW}}(\mathtt{Init\ \text{vs}\ GMM})$ remains consistently effective under the same setting. Therefore, the overall predicate $\Phi_{\mathtt{IWFW}}$ still provides robust detection against RCA.


\begin{table}[!t]
\caption{Validation accuracy of different DNN model.}
\label{tab6}
\begin{tabular}{ccccc}
\hline
\multirow{2}{*}{Dataset}  & \multirow{2}{*}{Model} & \multicolumn{2}{c}{Validation Accuracy} & \multirow{2}{*}{\begin{tabular}[c]{@{}c@{}}A2-DIDM \\ vs Original\end{tabular}}  \\ \cline{3-4}
                          &                        & Original           & A2-DIDM            &                                   \\ \hline
\multirow{4}{*}{CIFAR10}  & LeNet                  & 77.20\%           & 77.53\%           & +0.33\%                           \\
                          & ResNet18               & 95.63\%           & 95.26\%           & -0.37\%                           \\
                          & AlexNet                & 92.31\%           & 91.52\%           & -0.79\%                           \\
                          & VGG16                  & 93.97\%           & 93.02\%           & -0.95\%                           \\ \hline
\multirow{4}{*}{CIFAR100} & LeNet                  & 43.80\%            & 44.46\%           & +0.66\%                           \\
                          & ResNet18               & 75.97\%            & 74.53\%           & -1.44\%                          \\
                          & AlexNet                & 70.71\%           & 69.56\%           & -1.15\%                           \\
                          & VGG16                  & 74.14\%           & 74.44\%           & +0.30\%                           \\ \hline
Speech Commands           & DSCNN                  & 94.77\%           & 94.28\%           & -0.49\%                           \\ \hline
Yahoo Answers             & DPCNN                  & 72.25\%           & 70.97\%           & -1.28\%                           \\ \hline
\end{tabular}
\end{table}

\begin{table}[!t]
\centering
\caption{Model accuracy comparison of different model ownership verification schemes.}
\setlength{\tabcolsep}{3.2pt}
\begin{tabular}{ccccccc}
\toprule
\multirow{2}{*}{Dataset} & \multirow{2}{*}{Model} & \multicolumn{5}{c}{Schemes} \\
\cmidrule(lr){3-7}
& & Original & A2-DIDM & \cite{jia2021proof} & \cite{tyagi2025deepverifier} & \cite{tondi2024robust} \\
\midrule
\multirow{3}{*}{CIFAR10}
& ResNet20 & 91.73\% & 91.77\% & 89.54\% & 79.14\% & 91.30\% \\
& ResNet32 & 92.97\% & 92.67\% & 90.02\% & 80.75\% & 93.02\% \\
& ResNet44 & 94.16\% & 92.56\% & 90.46\% & 86.58\% & 93.22\% \\
\midrule
\multirow{3}{*}{CIFAR100}
& ResNet18 & 75.97\% & 74.53\% & 74.47\% & 64.58\% & 74.45\% \\
& ResNet34 & 76.85\% & 74.93\% & 74.87\% & 68.48\% & 75.38\% \\
& ResNet50 & 78.55\% & 76.96\% & 76.94\% & 70.51\% & 76.27\% \\
\bottomrule
\end{tabular}
\label{tab4}
\end{table}

\textbf{Experiment Results of Model Accuracy.} 
To evaluate the impact of A2-DIDM on model accuracy, we have implemented A2-DIDM-based training across multiple model architectures and compared the validation accuracy between A2-DIDM training and original DNN training. 
The experiment results are presented in Table \ref{tab6}. 
As shown in Table \ref{tab6}, under the same dataset and training task, A2-DIDM training achieves validation accuracy comparable to that of the original DNN training.
Since A2-DIDM only records weight checkpoints during training without altering the model architecture, its effect on DNN model performance is negligible.

In addition, we have conducted comparative experiments between A2-DIDM, PoL \cite{jia2021proof}, DeepVerifier \cite{tyagi2025deepverifier}, WBMB \cite{tondi2024robust}, and the original DNN models. The experimental results are summarized in Table \ref{tab4}. Among the methods, DeepVerifier \cite{tyagi2025deepverifier} is a hybrid watermarking approach combining black-box and white-box models, WBMB \cite{tondi2024robust} employs white-box multi-bit watermarking, and PoL \cite{jia2021proof} is based on replay training for model ownership protection. 
We compared the impact of different model ownership protection techniques on the test accuracy across two datasets and six ResNet models. 
Both DeepVerifier \cite{tyagi2025deepverifier} and WBMB \cite{tondi2024robust} embed 256-bit watermarks. 
The experimental results demonstrate that, compared to PoL \cite{jia2021proof} and DeepVerifier \cite{tyagi2025deepverifier}, the A2-DIDM method consistently achieves higher accuracy across different models. 
This suggests that A2-DIDM effectively preserves model performance while ensuring both data and functional privacy. 
Furthermore, under the same dataset and training task, A2-DIDM exhibits accuracy levels similar to those of the original DNN models and the more robust WBMB \cite{tondi2024robust}, implying that our method has negligible impact on DNN model performance.


\begin{table}[!t]
\centering
\caption{TPR and FPR for different predicate groups.}
\label{tab:ablation}
\setlength{\tabcolsep}{5pt}
\begin{tabular}{cccccccc}
\hline
Metrics & $\Phi_{\mathtt{CWCD}}$ & $\Phi_{\mathtt{IWFW}}$ & $\Phi_{\mathtt{MWCD}}$ & \begin{tabular}[c]{@{}c@{}}$\Phi_{\mathtt{CWCD}}$+\\  $\Phi_{\mathtt{IWFW}}$\end{tabular} & \begin{tabular}[c]{@{}c@{}}$\Phi_{\mathtt{MWCD}} $+\\ $ \Phi_{\mathtt{CWCD}}$\end{tabular} & \begin{tabular}[c]{@{}c@{}}$\Phi_{\mathtt{IWFW}} $+ \\ $\Phi_{\mathtt{MWCD}}$\end{tabular} & All   \\ \hline
TPR     & 1.000      & 1.000        & 1.000      & 1.000     & 1.000             & 1.000     & 1.000 \\
FPR     & 0.669       & 0.292    & 0.660      & 0.198         & 0.339   & 0.094  & 0.000 \\ \hline
\end{tabular}
\end{table}

\textbf{Experiment Results of Ablation.}
We conduct an ablation study on three predicate groups. Specifically, we launch RCA, CPA, and MDA attacks on six models; for each attack, we run 10 trials and generate 10 adversarial samples. We report the True Positive Rate (TPR) as the fraction of clean DIDMs that pass the predicate checks, and the False Positive Rate (FPR) as the fraction of attack DIDMs that pass the checks. 
Table \ref{tab:ablation} summarizes the results for different predicate combinations, including individual predicates, pairwise combinations, and all predicates. 
The results show that all configurations achieve a TPR of 1.000 on clean DIDMs, indicating no loss on benign samples. 
Moreover, using any single predicate yields a higher FPR on attack DIDMs, while combining predicates substantially reduces false positives.
By combining two predicates, the FPR can be further reduced, which demonstrates the complementarity among different predicate groups. 
When all three predicates are jointly employed, the FPR is reduced to 0.000, demonstrating that the full predicate set provides the most robust detection capability against adversarial DIDMs.


\section{Conclusion} \label{sec:con}
The issue of protecting model intellectual property arising from the commercialization of models has made model ownership auditing an important component.
We have introduced accumulators and the SNARK prooving system into DIDM auditing, and proposed the privacy-preserving A2-DIDM. 
By leveraging blockchain technology, the model owner can perform DNN training computations off-chain, while on-chain lightweight auditing operations are employed to verify the computational integrity and correctness, without revealing any specific information of DNN training.
Additionally, we perform experiments to assess the effectiveness and usability of A2-DIDM for privacy protection.

\bibliographystyle{IEEEtran}
\bibliography{sample-base}

\vfill

\end{document}